\documentclass[a4paper,USenglish]{lipics-v2021}
\usepackage{amsmath,amsthm,amssymb}
\usepackage{microtype}
\usepackage{apxproof}
\usepackage{hyperref}

\usepackage{graphicx}
\usepackage{verified-badges}
\newcommand{\leanurl}[1]{https://github.com/BioDisCo/roots/blob/master/lean/AsymptoticSubspace/#1}
\newcommand{\suppurl}[1]{https://github.com/BioDisCo/roots/blob/master/manuscript/supplementary_material/#1}
\usepackage{xcolor}
\definecolor{suppcolor}{HTML}{1A6496}
\newcommand{\suppref}[2]{\href{\suppurl{#1}}{\textcolor{suppcolor}{\textbf{#2}}}}

\usepackage{tikz}
\usetikzlibrary{shapes.symbols,arrows}
\usetikzlibrary{matrix,calc,intersections,patterns}

\usepackage{algorithm,algorithmic}
\usepackage{pgfplots}
\pgfplotsset{compat=1.18}

\newcommand{\INITIALLY}{\REQUIRE{}}
\newcommand{\ROUND}{\ENSURE{}}
\newcommand{\bra}{\left\langle}
\newcommand{\ket}{\right\rangle}

\newcommand{\nop}[1]{}

\newtheoremrep{theorem}{Theorem}
\newtheoremrep{lemma}[theorem]{Lemma}
\newtheoremrep{corollary}[theorem]{Corollary}

\DeclareMathOperator{\vol}{vol}

\DeclareMathOperator{\dist}{dist}
\DeclareMathOperator{\linspan}{span}
\DeclareMathOperator{\dir}{dir}

\DeclareMathOperator{\Sym}{Sym}

\newcommand{\N}{\mathcal{G}}

\newcommand{\IR}{\mathbb{R}}
\newcommand{\IN}{\mathbb{N}}

\newcommand{\Ball}{B}

\renewcommand{\le}{\leqslant}
\renewcommand{\leq}{\leqslant}
\renewcommand{\ge}{\geqslant}
\renewcommand{\geq}{\geqslant}

\usepackage{graphicx}

\DeclareMathOperator\In{In}

\DeclareMathOperator\poly{poly}

\title{Asymptotic Subspace Consensus in Dynamic Networks}

\author{Matthias {Függer}}{Université Paris-Saclay, CNRS, ENS Paris-Saclay, LMF, Gif-sur-Yvette, France \and \url{https://home.lmf.cnrs.fr/MatthiasFuegger/} }{mfuegger@lmf.cnrs.fr}{https://orcid.org/0000-0001-5765-0301}{}

\author{Thomas {Nowak}}{Université Paris-Saclay, CNRS, ENS Paris-Saclay, LMF, Gif-sur-Yvette, France \and Institut Universitaire de France, Paris, France \and \url{https://www.thomasnowak.net}}{thomas@thomasnowak.net}{https://orcid.org/0000-0003-1690-9342}{}

\authorrunning{M. Függer and T. Nowak} 

\funding{The work was supported by the French National Research Agency
(ANR) projects DREAMY (ANR-21-CE48-0003) and COSTXPRESS (ANR-23-CE45-0013), as well as the SAIF project, funded by the ``France 2030'' government investment plan managed by ANR, under the reference ANR-23-PEIA-0006.}

\acknowledgements{We thank Emmanuel Godard and Eloi Perdereau for introducing us to the question of possible generalizations of the asymptotic consensus problem.}

\nolinenumbers 
\hideLIPIcs 

\Copyright{M. Függer and T. Nowak} 

\ccsdesc[500]{Theory of computation~Distributed computing models}

\keywords{Averaging, dynamic networks, consensus, higher dimensional} 

\supplement{\url{https://github.com/BioDisCo/roots}}

\begin{document}

\maketitle

\begin{abstract}
	We introduce the problem of asymptotic subspace consensus, which requires the outputs of processes to converge onto a common subspace while remaining inside the convex hull of initial vectors.
	This is a relaxation of asymptotic consensus in which outputs have to converge to a single point, i.e., a zero-dimensional affine subspace.

	We give a complete characterization of the solvability of asymptotic subspace consensus in oblivious message adversaries.
	In particular, we show that a large class of algorithms used for asymptotic consensus gracefully degrades to asymptotic subspace consensus in distributed systems with weaker assumptions on the communication network.
	We also present bounds on the rate by which a lower-than-initial dimension is reached.
\end{abstract}

\section{Introduction}

When computation is distributed across several processes, an often encountered problem is the one of reaching consensus on a common property among the processes.
Consensus problems have thus been extensively studied in many variants.
Examples include agreement on a discrete state like a transaction status \cite{Lamport82,schneider1990implementing}, on continuous values such as time \cite{Hal84,srikanth87,LL88,dolev2004self,kopetz2009clock}, and on multidimensional properties like three-dimensional coordinates in space \cite{Bal05,murray2007recent,mendes15:multidimensional,FN18:disc}.

This work falls into the latter category of agreement on multidimensional continuous values.
We assume a round-wise computational model in which processes communicate and update their state in rounds.
In each round a process sends messages, receives messages, and updates its local state according to the previous state and received messages.
The dynamics of the underlying communication network are modeled by a set of communication graphs that may change arbitrarily in each round.
So-called oblivious message adversaries have been widely used in distributed computing to model highly dynamic networks \cite{SW89,charron09:HO,afek13:asynchrony,CG15}.

In this model the well-studied problem of asymptotic consensus is formulated as: Each process $i$ starts with an initial input $x_i(0) \in \IR^d$, for some dimension $d \geq 1$.
An algorithm solves asymptotic consensus if all process outputs $x_i(t)$ converge to a common value (Agreement property) that lies within the convex hull of initial values (Validity property).
Closely related to this problem is the problem of $\varepsilon$-consensus, or approximate consensus, where processes terminate and (Agreement) is replaced with the requirement that values in the final round are within distance $\varepsilon > 0$.

Charron-Bost, Függer, and Nowak \cite{CBFN15:icalp} gave a characterization of the oblivious message adversaries in which one-dimensional asymptotic consensus is solvable.
The authors showed that simple averaging algorithms with mild conditions on their weights solve the problem whenever it is solvable and that the latter is the case exactly in \emph{rooted oblivious message adversaries}, i.e., where each communication graph has at least one process (the root of this graph) that reaches all other processes.
Subsequent work showed that alternating flooding and averaging considerably speeds up convergence \cite{CBFN16:icalp}.
Algorithms based on this scheme were shown to be asymptotically optimal by F{\"u}gger, Nowak, and Schwarz \cite{fugger2021tight}.

Averaging algorithms naturally generalize to dimensions $d > 1$, with the same characterization of solvability for asymptotic and $\varepsilon$-consensus, and the main challenge being high convergence rates despite large dimensions~\cite{CBFN16:centroid,FN18:disc}.
Multidimensional $\varepsilon$-consensus has also been studied in the context of adversarial faults and asynchrony by Mendes, Herlihy, Vaidya, and Garg \cite{mendes15:multidimensional} with a time to reach $\varepsilon$-consensus that depends linearly on~$d$.
Via a reduction to dynamic networks, Függer and Nowak \cite{FN18:disc} showed that this dependency can be dropped, with the convergence rate of asymptotic consensus and the termination time in $\varepsilon$-consensus being independent of~$d$ for certain averaging algorithms.
Approximate consensus was also investigated in non-synchronous settings \cite{attiya2023step,ghinea2023multidimensional}.

In this work we investigate the behavior of averaging algorithms in weaker than rooted oblivious message adversaries,
where asymptotic consensus is not necessarily solvable.
This is motivated by an understanding of how averaging algorithms degrade in non-rooted oblivious message adversaries,
and by the hypothesis that some applications may not require convergence to a single value.

\paragraph*{Results and outline}

We define the problem of \emph{$d$-to-$s$-dimensional asymptotic subspace consensus}, with $d > s$, by weakening
(Agreement) into (Subspace Agreement), requiring that the limits of the processes outputs are within an $s$-dimensional subspace
of the $d$-dimensional space.
We show that $d$-to-$s$-dimensional asymptotic subspace consensus is solvable if and only if the oblivious message adversary is \emph{$(s+1)$-rooted}, i.e., in each graph, each process is reachable from at least one of a set of $s+1$ roots.
In case it is solvable, averaging algorithms solve it under weak assumptions on their weights.
In particular, this is fulfilled when choosing equal weights (Figure~\ref{fig:panel_1}a).
Figures~\ref{fig:panel_1}b--d show simulations of averaging in an initially 3-dimensional space that are seen to converge onto 0-dimensional in 1-rooted (b), 1-dimensional in 2-rooted (c), and 2-dimensional subspaces in 3-rooted oblivious message adversaries (d).
We prove that this in the general case as outlined:

We introduce the model and the problem in Sections~\ref{sec:model} and~\ref{sec:problem}, respectively.
Our analysis builds upon a reduction from $k$-rooted oblivious message adversaries to \emph{$k$-broadcastable} adversaries, where each process directly receives a message from one of $k$ processes (called the \emph{broadcasting set}).
Asymptotically optimal bounds for the complexity of this reduction have been proven by El-Hayek, Henzinger, and Schmid~\cite{el2023asymptotically}.

Section~\ref{sec:problem} also establishes a lower bound for the message adversaries in which the problem of asymptotic subspace consensus can be solved.
Section~\ref{sec:algos} describes a class of algorithms that allow to solve asymptotic subspace consensus, namely averaging algorithms.
In Section~\ref{sec:analysis} we prove our two main results:
In Section~\ref{sec:converge_zero_vol}, a rate at which certain averaging algorithms contract to a space that is of lower dimension than the space of the initial values.
The proof is by showing that the volume that contains the process outputs converges to~0.
Inspired by the symmetrization of Charron-Bost, Függer, and Nowak~\cite{CBFN16:centroid}, we use a Steiner-type symmetrization of the convex hull of the process outputs.
Most importantly, this symmetrization preserves volumes of cuts along the first axis and guarantees concavity of the ball-radius function along the first axis.
We then establish a lower bound on the volume contraction between successive rounds.

Section~\ref{sec:converge_lowdim} presents a complete characterization of oblivious message adversaries with respect to the achievable reduction in dimensionality.
The proof is based on an observation shown in Figures~\ref{fig:panel_1}e at the example of $2$-rooted message adversaries.
Tracking the dynamics of the affine subspace spanned by the outputs of two processes that are in the broadcasting set, one observes convergence of this subspace and attraction of the other process outputs to this space.
Again, we show that this holds in the general case.

We conclude in Section~\ref{sec:conclusion}.

Some definitions and results have been formalized in the Lean~4 proof assistant~\cite{lean4} using the Mathlib library.
Results marked with \leanproof{} have machine-checked proofs; definitions and statements marked with \leanformalized{} have been formalized without a complete proof (see \url{https://github.com/BioDisCo/verified-badges}).
The Lean formalizations and proofs were generated with the assistance of Claude Opus~4.6 and GPT-5.3-Codex.
These formalizations serve as supplementary information and are not meant as replacements for the natural-language proofs in this paper.
Clicking a badge links to the corresponding Lean source code.

\begin{figure}[hpt]
	\centering
	\includegraphics[width=\linewidth]{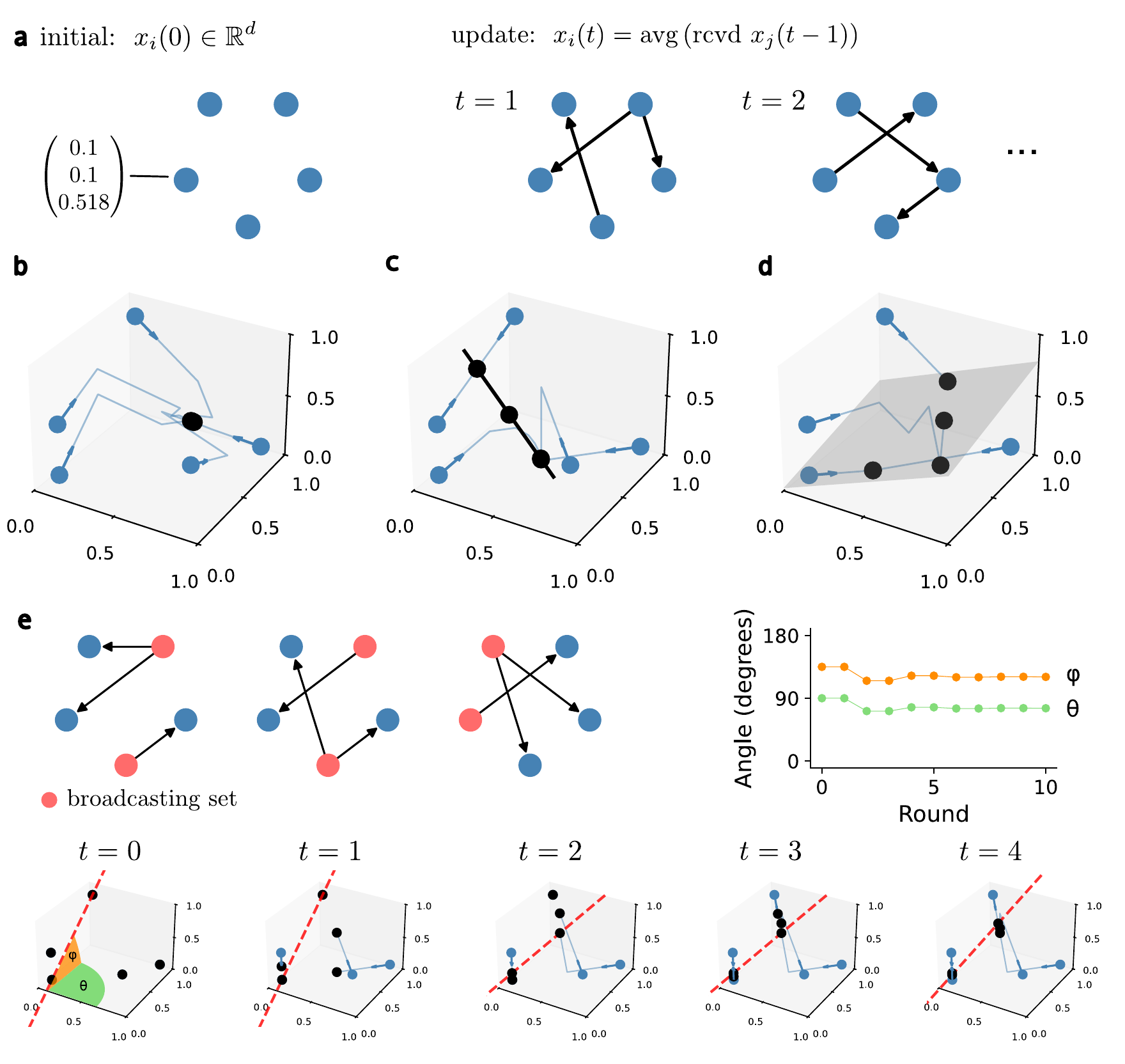}
	\caption{\textbf{Averaging algorithm running in dynamic networks.} \textbf{a} Distributed system with $5$ nodes, executing an averaging algorithm in $\IR^3$. Nodes start with initial values in $\IR^3$ and average values received within the round's communication graph ($2$-rooted graphs shown). \textbf{b--d} Execution of the equal neighbor algorithm in $\IR^3$ for $10$ rounds. Initial values (blue) and round-10 values (black) shown. \textbf{b} In a $1$-rooted oblivious message adversary, convergence onto a single point (black).  \textbf{c} In a $2$-rooted oblivious message adversary, convergence onto a line (black) \textbf{d} In a $3$-rooted oblivious message adversary, convergence onto a plane (black).
		Animated (Movies~\suppref{Movie_S1_execution_1_root_animation.mp4}{S1}, \suppref{Movie_S2_execution_2_roots_animation.mp4}{S2}, \suppref{Movie_S3_execution_3_roots_animation.mp4}{S3}) and interactive (Documents~\suppref{Document_S1_execution_1_root_execution.html}{S1}, \suppref{Document_S2_execution_2_roots_execution.html}{S2}, \suppref{Document_S3_execution_3_roots_execution.html}{S3}) versions of \textbf{b--d} are provided as supplementary material.
		\textbf{e} Simulation of averaging algorithm within an oblivious message adversary with $2$-broadcastable communication graphs
		(broadcasting set marked in red).
		Initial values ($t=0$) and outputs until $t=4$ are shown (black).
		The linear subspace spanned by the processes in the current round's broadcasting set is shown (dashed red).
		Its polar coordinates ($\varphi$ and $\theta$) are seen to converge.
		An animated execution oscillating on a $1$-dimensional subspace is provided as Movie~\suppref{Movie_S4_execution_oscillating_on_1d_animation.mp4}{S4} and Document~\suppref{Document_S4_execution_oscillating_on_1d_execution.html}{S4}.
	}
	\label{fig:panel_1}
\end{figure}

\section{Computational Model\leanformalized{\leanurl{DefComputationalModel.lean}}}\label{sec:model}

We write $\IN = \{1,2,\dots\}$ and $\IN_0 = \IN \cup \{0\}$.
Further, $[n] = \{1,\dots, n\}$.
For directed graphs $G_1 = ([n],E_1)$ and $G_2 = ([n],E_2)$, we write
$G_1 \circ G_2$ for the product graph $G = ([n], E)$ with $(i,j) \in E$
if and only if there exists a $u \in [n]$ such that $(i,u) \in E_1$ and
$(u,j) \in E_2$.

We assume a system of $n > 1$ processes that communicate via a synchronous message-passing network.
In each round, processes update their state deterministically based on their local state and the
messages they receive.
Part of their local state is an output, with the output of process $i \in [n]$ in round $t \geq 0$ denoted as $x_i(t)$.
In this work we assume that a process output is within some $\IR^d$.

Network links are dynamic, akin to message adversaries~\cite{afek13:asynchrony} or the Heard-Of
model~\cite{charron09:HO}.
That is, every process $i\in[n]$ sends a message in every round $t\in\IN$,
then receives the messages from all incoming neighbors $\In_i(t)$ in the (directed)
round-$t$ communication graph $G_t=([n],E_t)$, and updates its local state as a function
of its previous state and the received messages.
Every process receives its own message, i.e., $(i,i)\in G_t$ for all $i\in[n]$.
The execution of a deterministic algorithm is uniquely determined by the
initial states and the sequence of communication graphs.
Although the model is formally synchronous, it has been shown to capture
classical round-based asynchronous~\cite{charron09:HO} and even non-benign~\cite{biely07:tolerating,mendes15:multidimensional} models.

\paragraph*{Message adversaries}

The dynamics of communication links is described by message adversaries.
In this work we focus on dynamic networks that can switch arbitrarily between communication graphs from a so-called \emph{oblivious message adversary}~$\N$, which is a non-empty set of communication graphs.
We say an algorithm \emph{solves a problem} in~$\N$ if it satisfies its specification in all executions in which all communication graphs are chosen from~$\N$.

For many non-trivial problems, there is no algorithm that solves the problem in an arbitrary~$\N$.
For example, Charron-Bost, Függer, and Nowak~\cite{CBFN15:icalp} have shown that asymptotic consensus,
i.e., 1-to-0-dimensional asymptotic subspace consensus, is
solvable in precisely those $\N$ that contain only rooted graphs.
The following definitions are natural generalizations of this concept.

A communication graph~$G$ is \emph{$k$-rooted} if there exists a set
$M\subseteq[n]$ of at most~$k$ processes such that every process in~$[n]$
is reachable from~$M$ in~$G$.
In this case, we call~$M$ a \emph{root set} of~$G$.
An oblivious message adversary is $k$-rooted if all its communication graphs are.

Likewise, graphs that contain a star play a central role in consensus-type problems~\cite{charron09:HO}.
Generalizing such graphs, we define:
A communication graph~$G$ is \emph{$k$-broadcastable} if there exists a set
$M\subseteq[n]$ of at most~$k$ processes such that every process in~$[n]$
has an incoming edge from at least one process in~$M$ in~$G$.
In this case, we call~$M$ a \emph{broadcasting set} of~$G$.
Every $k$-broadcastable communication graph is $k$-rooted, but the converse is
not true.
An oblivious message adversary is $k$-broadcastable if all its communication graphs are.

We next discuss links between these two classes of graphs.

\paragraph*{From Rooted to Broadcastable Graphs}

Relaying messages over multiple rounds with communication graphs $G_1$, \dots, $G_t$ results in reception of the message according to the product graph $G = G_1 \circ\cdots\circ G_t$ in round~$t$.
This fact can be used in reductions from one adversary to another by simulating a round with communication graph $G$ from rounds with graphs $G_1$, \dots, $G_t$.

Charron-Bost, Függer, and Nowak~\cite{CBFN15:icalp} showed a reduction from any sequence of $n-1$ 1-rooted communication graphs to so-called non-split graphs, of which any sequence of $O(\log\log n)$ graphs results in a 1-broadcastable graph \cite{fugger2020radius}, i.e., graphs that contain a star.
The resulting bound of $O(n \log\log n)$ rounds from 1-rooted to 1-broadcastable was later improved by El-Hayek, Henzinger, and Schmid~\cite{el2023asymptotically} to $\lceil(1+\sqrt{2}n)-1\rceil$ graphs.

For the more general $k$-broadcastable graphs, one can show that from $k$-rooted communication graphs, one also obtains $k$-broadcastable graphs in a bounded number of rounds.
A short argument for a polynomial bound is as follows:

Let $t \geq n^{k+1}$ and let $G_1,\dots,G_t$ be $k$-rooted.
By the pigeonhole principle, there exists a subsequence $G_1',\dots,G_s'$ of at least
$s \geq n^{k+1}/\binom{n}{k}\geq n^{k+1}/n^k = n$ communication graphs in which~$M$ is a root set.
By considering the influence sets
$S_i(r) = \big\{ j\in[n] \mid (i,j) \in E(G_1'\circ \cdots \circ G_r') \big\}$ for every $i\in M$ and
$S(r) = \bigcup_{i \in M} S_i(r)$,
one observes that $S(r)$ strictly grows until it covers the set~$[n]$ of all processes.
Thus, since $\lvert S(0)\rvert \leq k$ and $s \geq n$, influence set $S(s)$ covers~$[n]$.
In particular this means that~$M$ is a broadcasting set in the product graph $G = G_1\circ\cdots\circ G_t$.

A more involved analysis by El-Hayek, Henzinger, and Schmid~\cite{el2023asymptotically} shows that indeed a bound linear in $n$, and independent of $k$, holds.

\begin{theorem}[{\cite[Theorem 32]{{el2023asymptotically}}}]\label{thm:rooted:to:broadcastable}
	Any product of at least~$\frac{\pi^2+6}{6}n + 1$ communication graphs that are $k$-rooted is
	$k$-broadcastable.
\end{theorem}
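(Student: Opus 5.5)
Since the statement is cited from El-Hayek, Henzinger, and Schmid, the plan is to reconstruct the shape of their argument rather than derive it from scratch. The key change from the short polynomial argument above is that we must not fix a single root set~$M$ by pigeonhole, since that costs a factor of roughly $\binom{n}{k}$. We will prove the contrapositive for a product $G = G_1\circ\cdots\circ G_t$ of $k$-rooted graphs with self-loops. Assume no set of $k$ processes broadcasts in one hop in~$G$. Then we show $t$ must be small.

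The potential we use is the family of \emph{in-sets} $I_j(r) = \{ i \mid (i,j)\in E(G_{t-r+1}\circ\cdots\circ G_t)\}$. These are the processes heard by~$j$ within the last $r$ graphs. By self-loops they grow monotonically in~$r$. A set $M$ broadcasts in~$G$ exactly when $M$ hits every $I_j(t)$, i.e.\ $M$ is a hitting set of size at most~$k$. So we must show that after about $c\,n$ rounds the hypergraph $\{I_j(t)\}_j$ has transversal number at most~$k$. The key step is that in each $k$-rooted graph, every process lies downstream of one of at most $k$ roots. Composing, $I_j$ absorbs the in-set of any predecessor~$u$ of~$j$: $I_j(r+1)\supseteq I_u(r)$ for edges $(u,j)$ of the newly prepended graph, with indexing adjusted accordingly.

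The main step is a charging argument. Consider a minimal hitting set and the $k$-rootedness of each graph. Whenever the transversal number exceeds~$k$, there is a chain of in-sets, ordered by the reachability trees of the roots, among which some in-set must grow by a constant number of elements, or some pair of in-sets must merge in the sense of nesting. We will weight these events. Merges along a tree of depth~$\ell$ contribute sizes that telescope like $\sum_{\ell} 1/\ell^2$, and this is the source of the constant $\pi^2/6$. Increments of individual in-sets contribute the additional $n$. Summing both contributions bounds the number of rounds at which the transversal number can stay above $k$ by $\frac{\pi^2}{6}n + n$. Since independence from~$k$ is claimed, the accounting must be carried out per in-set, not per candidate root set.

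I expect the main obstacle to be the charging step: proving that every round either enlarges in-sets substantially or reduces the ``number of distinct minimal in-sets'' in a way whose total cost is summable with the $1/\ell^2$ weights. If the direct argument stalls, my first fallback is to accept the weaker bound $O(n\log n)$ via a harmonic-sum charging. That weaker bound suffices for all later uses in this paper, which only need a bound on the number of rounds depending on $n$ alone. Otherwise I would simply cite \cite[Theorem 32]{el2023asymptotically}, as the paper does.
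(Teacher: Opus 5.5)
There is a genuine gap. You set up a correct reformulation, but the one step that carries all the difficulty is left unproved. Note that the paper does not prove this statement either: it imports it from \cite{el2023asymptotically}, and the only self-contained argument it gives is the pigeonhole bound of $n^{k+1}$ rounds.

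Your hitting-set view is fine. $M$ is a broadcasting set of $G_1\circ\cdots\circ G_t$ exactly when $M$ meets every $I_j(t)$, and the $I_j(r)$ are nondecreasing in $r$ by self-loops. The problem is the dichotomy \emph{some in-set grows by a constant or two in-sets nest}. It is asserted with no argument from $k$-rootedness, and even if granted it does not give a linear bound. Growth of an in-set is essentially free: prepending the $k$-rooted graph $G_{t-r}$ gives $I_j(r+1)=\bigcup_{u\in I_j(r)}\In_u(t-r)$. This strictly contains $I_j(r)$ unless no edge of $G_{t-r}$ enters $I_j(r)$ from outside, in which case $I_j(r)$ meets every root set of $G_{t-r}$. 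But there are $n$ in-sets, each able to grow $n-1$ times, so charging rounds to increments yields only $O(n^2)$. Your claim that \emph{increments contribute the additional $n$} therefore needs a potential you never define. Likewise, the $1/\ell^2$ cost per nesting event is read off from the target constant $\pi^2/6$ rather than derived. You never say what is charged at \emph{depth $\ell$}, or why the total comes to $\frac{\pi^2}{6}n$. Obtaining a bound linear in $n$ and independent of $k$ is precisely the content of the cited theorem, so this is the missing idea, not a detail.

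There are two smaller issues. First, the absorption relation runs in the wrong direction for your suffix in-sets. Increasing $r$ prepends $G_{t-r}$, and an edge $(u,j)$ of $G_{t-r}$ does not give $I_j(r+1)\supseteq I_u(r)$. Take three processes, let $G_{t-r}$ have the single non-loop edge $(u,j)$, and let all later graphs have the single non-loop edge $(i,u)$; all of these graphs are $2$-rooted. Then $i\in I_u(r)$, but $I_j(r+1)=\{u,j\}$. The relation holds when a graph is appended at the end, which changes the product you are analysing.

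Second, the fallback does not rescue the statement. An $O(n\log n)$ bound, which is itself only announced, is not the claimed $\frac{\pi^2+6}{6}n+1$. It is also not enough for all later uses: Corollary~\ref{cor:bound} uses this exact constant both as the length of the relaying phase and in its round bound. Only the qualitative characterization needs merely some finite bound, and the paper's pigeonhole argument already supplies one. As written, your proposal therefore reduces to citing \cite{el2023asymptotically}, which is what the paper does.
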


\section{Asymptotic Subspace Consensus}\label{sec:problem}

We start with basic geometric notations.
For a set $A\subseteq \IR^d$, we denote by $\bar{A}$ the topological closure
of~$A$.
For any non-empty finite set $X\subseteq \IR^d$, denote by $\poly(X)$ the
polyhedron generated by~$X$, i.e., its convex hull.
For any (Lebesgue) measurable set $A\subseteq \IR^d$, denote its volume in $\IR^d$ by
$\vol_d(A)$, or simply $\vol(A)$ if the dimension is clear from the context.
We write $\lVert z \rVert = \sqrt{\sum_{k=1}^d z_k^2}$ for the Euclidean norm of any $z\in\IR^d$.
For sets $A, B\subseteq \IR^d$, denote by
$\dist(A,B)
	=
	\inf_{a \in A, b \in B} \lVert a - b \rVert
$
their Euclidean distance.
We note that $\dist$ is not a metric; it does not satisfy the triangle inequality and $\dist(A,B) = 0$ does not necessarily imply $A=B$.

We say that a sequence of vectors $x(t) \in \IR^d$, $t \in \IN$, \emph{converges onto}
a set $X \subseteq \IR^d$ if $\displaystyle\lim_{t\to\infty}\dist(x(t),X) = 0$.
Equipped with this, we are now in the position to state the asymptotic subspace consensus problem:

In the \emph{$d$-to-$s$-dimensional asymptotic subspace consensus problem}, with $d > s$, every process~$i \in [n]$ starts with an initial vector $x_i(0) \in \IR^d$ and outputs a vector $x_i(t) \in \IR^d$ in every round $t\in\IN$ such that in every execution:
\begin{description}
	\item[(Subspace Agreement)] There exists an affine subspace $E\subseteq \IR^d$ of dimension~$s$, such that all sequences $\big(x_i(t)\big)_{t \geq 0}$ converge onto~$E$.

	\item[(Validity)] All sequences $\big(x_i(t)\big)_{t \geq 0}$ converge onto the convex hull of the set of initial vectors.

\end{description}

Asymptotic consensus on scalar inputs is a special case with $d=1$ and $s=0$.

\subsection{Lower bound for dimension reduction}

We first show a lower bound on the attainable dimensionality reduction in asymptotic subspace consensus in terms of the sizes of the root sets of oblivious message adversaries.
The proof is a generalization of the impossibility of asymptotic consensus in non-rooted oblivious message adversaries \cite{CBFN15:icalp}.
It does not rely on the dynamics of the communication network and requires a single static communication graph only, where a sufficiently large number of processes are isolated and remain on their initial values.
The proof is given in the Appendix.

\begin{theoremrep}[\leanproof{\leanurl{ThmImposs.lean}}]\label{lem:imposs}
	For $d > s \geq 0$, the $d$-to-$s$-dimensional asymptotic subspace consensus problem is unsolvable in oblivious message adversaries that are not $(s+1)$-rooted.
\end{theoremrep}
\begin{proof}
	In the following fix an $s \geq 0$ and set $k = s+1$.
	If an oblivious message adversary is not $k$-rooted, then it contains a communication
	graph~$G$ that is not $k$-rooted.
	But then~$G$ contains at least $k+1$ strongly connected components without incoming edges.
	Let~$M$ be a set of $k+1$ processes from different such components.
	Without loss of generality, by permuting the process names, assume that $M =
		\{1,\dots,k+1\}$.
	For every $i\in M$, let~$A_i$ be the set of processes that reach~$i$ in~$G$.
	By choice of~$M$, the sets~$A_i$ are pairwise disjoint.

	We will show by means of contradiction that for $d = k$ there is no algorithm that solves $d$-to-$s$-dimensional asymptotic subspace consensus in oblivious message adversary $\N = \{G\}$.
	The theorem's statement then holds for $d' > d$ by embedding $\IR^d$ in $\IR^{d'}$.

	We construct an execution of the algorithm as follows.
	For every $j\in A_i$, choose the initial vectors $x_j(0) = e_i \in \IR^d$ if $1\leq i \leq k$ where~$e_i$ is the $i$\textsuperscript{th} standard basis vector, and $x_j(0) = 0$ if $i = k+1$.
	Choose the communication graph of every round~$t$ to be $G_t = G$.
	By Validity, for every $i\in M$, since the convex hull of initial values
	in~$A_i$ is a singleton and~$i$ does not hear from any process outside
	of~$A_i$, the sequence $\big(x_i(t)\big)_t$ converges onto $\{x_i(0)\}$, that is, it
	converges to~$x_i(0)$.
	But the $k+1 = s+2$ initial values, and thus the limits, of processes in~$M$ do not
	belong to a common subspace of dimension~$s$.
	This is a contradiction to (Subspace Agreement).
\end{proof}

\section{Averaging Algorithms}\label{sec:algos}

A deceptively simple class of algorithms are averaging algorithms.
An \emph{averaging algorithm} keeps only its current vector~$x_i(t)$ as its
state, which it sends in every round.
After having received the vectors of other processes, it updates its vectors to
a weighted average of the received vectors, see Algorithm~\ref{alg:averaging}.
In concordance with the term averaging, the weights~$w_{ij}(t)$ are assumed to be non-negative and $\sum_{j\in [n]} w_{ij}(t) = 1$.
Averaging algorithms differ in choices of weights.
Maybe the most natural averaging algorithm is the \emph{equal neighbor}
algorithm, which assigns the same weight to all received vectors.

\begin{algorithm}[ht]
	\begin{algorithmic}[1]
		\INITIALLY{}
		\STATE $x_i(0)$ is process $i$'s initial vector in~$\IR^d$
		\ROUND{}
		\STATE send $x_i(t-1)$
		\STATE receive vectors $x_j(t-1)$ for $j\in \In_i(t)$
		\STATE determine weights~$w_{ij}(t)$ for received vectors
		\STATE
		$\displaystyle x_i(t) \gets \sum_{j\in \In_i(t)} w_{ij}(t) x_j(t-1)$
	\end{algorithmic}
	\caption{Averaging algorithm for process $i$}
	\label{alg:averaging}
\end{algorithm}

For ease of notation, we set $w_{ij}(t) = 0$ if $j \not\in \In_i(t)$, allowing us to write  the update step as $x_i(t) = \sum_{j\in [n]} w_{ij}(t) x_j(t-1)$.

Following Charron-Bost, Függer, and Nowak~\cite{CBFN16:icalp}, we call an averaging algorithm \emph{$\alpha$-safe} if it guarantees a minimum positive weight $w_{ij}(t)\geq \alpha > 0$ for messages it receives, i.e., from some $j\in \In_i(t)$.
For example, the equal neighbor algorithm is $1/n$-safe.

We next define a property for averaging algorithms executed in $k$-broadcastable oblivious message adversaries for some $k \geq 1$.
The property will be central in quantifying the influence a broadcasting set has on the other processes.

\begin{definition}[Minimum broadcasting weight\leanformalized{\leanurl{DefMinimumBroadcastWeight.lean}}]
	Let round $t \geq 1$, graph $G_t$ be $k$-broadcastable for some $k \geq 1$, and $M(t)$ be the broadcasting set in $G_t$.
	For an averaging algorithm, an update step $x_i(t) = \sum_{j \in [n]} w_{ij}(t) x_j(t-1)$ has minimum broadcasting weight $\alpha$ if $\sum_{j \in M(t)} w_{ij}(t) \geq \alpha$.
	An averaging algorithm has minimum broadcasting weight $\alpha$ if all update steps do.
\end{definition}

Executed in a $k$-broadcastable oblivious message adversary, an averaging algorithm that is $\alpha$-safe has minimum broadcasting weight $\alpha$, but not necessarily vice versa.

\section{Analysis}\label{sec:analysis}

We start the analysis with some notation.
Denote by~$X(t)$ the set
$\{ x_i(t) \mid i \in [n] \}$
of vectors in round~$t$
and
by
$P(t) = \poly(X(t))$ the generated polyhedron.
For a set $A \subseteq [n]$, we
abbreviate $X_A(t) = \{ x_i(t) \mid i \in A \}$
as well as $P_A(t) = \poly(X_A(t))$.

In the following Section~\ref{sec:distance_H} we establish a lower bound on the attraction a broadcasting set has on the other processes.

\subsection{Distance to Hyperplane}
\label{sec:distance_H}

The following is a formula for the distance of a point to a half-space.
Its proof is given in the Appendix for completeness.

\begin{lemmarep}[\leanproof{\leanurl{LemHalfspaceDistFormula.lean}}]\label{lem:halfspace:distance:formula}
	Let point $q\in \IR^d$ and normal vector $v\in \IR^d\setminus\{0\}$ define the half-space $H = \{ h\in\IR^d \mid \langle h - q , v \rangle < 0 \}$.
	Then
	$\dist(\{z\}, H) = \langle z - q , v\rangle / \lVert v\rVert$
	for all points $z \in \IR^d \setminus H$.
\end{lemmarep}
\begin{proof}
	Let \(z\in\mathbb{R}^d\setminus H\). Then \(\langle z-q,v\rangle\ge 0\).
	For any \(h\in H\) we have \(\langle h-q,v\rangle<0\), hence
	\(
	\langle z-h,v\rangle
	= \langle z-q,v\rangle-\langle h-q,v\rangle
	> \langle z-q,v\rangle
	\).
	By the Cauchy--Schwarz inequality,
	\(
	\|z-h\|\,\|v\|\ge \langle z-h,v\rangle > \langle z-q,v\rangle ,
	\)
	and therefore
	\(
	\|z-h\| > \frac{\langle z-q,v\rangle}{\|v\|}
	\).
	Since this holds for all \(h\in H\), it follows that
	\begin{equation}\label{eq:lem:halfspace:distance:formula:lower}
		\dist(\{z\},H)\ge \frac{\langle z-q,v\rangle}{\|v\|}.
	\end{equation}
	Now let \(\varepsilon>0\) and define
	\[
		h_\varepsilon = z-\left(\frac{\langle z-q,v\rangle}{\|v\|^2}+\varepsilon\right)v .
	\]
	Then
	\(
	\langle h_\varepsilon-q,v\rangle
	= \langle z-q,v\rangle
	-\left(\frac{\langle z-q,v\rangle}{\|v\|^2}+\varepsilon\right)\|v\|^2
	= -\varepsilon\|v\|^2<0,
	\)
	so \(h_\varepsilon\in H\). Moreover,
	\[
		\|z-h_\varepsilon\|
		= \Bigl(\frac{\langle z-q,v\rangle}{\|v\|^2}+\varepsilon\Bigr)\|v\|
		= \frac{\langle z-q,v\rangle}{\|v\|}+\varepsilon\|v\|.
	\]
	Taking the infimum over \(H\) and letting \(\varepsilon\to 0\) yields
	\begin{equation}\label{eq:lem:halfspace:distance:formula:upper}
		\dist(\{z\},H)\le \frac{\langle z-q,v\rangle}{\|v\|}.
	\end{equation}
	Combining~\eqref{eq:lem:halfspace:distance:formula:lower} and~\eqref{eq:lem:halfspace:distance:formula:upper} concludes the proof.
\end{proof}

For any averaging algorithm, processes only update their output to a value within the
convex hull, that is, the polyhedron, of received values.
Thus any open half-space $H$ that does not contain points in round $t \ge 0$ will not contain points in round $t+1$.
In fact one can say more for executions in oblivious message adversaries that are $k$-broadcastable, for some $k \geq 1$: processes update their outputs to values that have a minimum distance to such empty half-spaces.
This is due to them being attracted by the values of the broadcasting set.
The following lemma establishes a lower bound on this distance (Figure~\ref{fig:panel_2}a); its proof is given in the Appendix.

\begin{lemmarep}[\leanproof{\leanurl{LemHalfspaceZone.lean}}]\label{lem:halfspace:zone}
	Consider an averaging algorithm executed in some $k$-broadcastable oblivious message adversary with minimum broadcasting weight $\alpha > 0$ and fix a round $t \geq 1$.
	Let $H \subseteq \IR^d$ be an open half-space such that $X(t-1) \cap H = \emptyset$.
	Then $\dist(X(t), H) \geq \alpha \dist(P_{M(t)}(t-1), H)$.
\end{lemmarep}
\begin{proof}
	Let $p \in X_{M(t)}(t-1)$ be a value broadcast in round $t$ and $q\in \bar{H}$ such that $\lVert p - q\rVert = \dist(X_{M(t)}(t-1),\bar{H})$.
	Then $\lVert p - q\rVert = \dist(X_{M(t)}(t-1),H)$.
	Because the lemma is trivial if $\dist(X_{M(t)}(t-1),H)=0$, we assume $p\neq q$ in the rest of the proof.

	Let $i\in[n]$.
	Then, using $\langle x_j(t-1) - q , p - q\rangle \geq 0$ for all $j\in[n]$ because $X(t-1)\cap H = \emptyset$, we have
	\begin{equation}\label{eq:lem:halfspace:zone:ineqs}
		\begin{split}
			\left\langle x_i(t) - q \, , \, p - q \right\rangle
			& =
			\left\langle \sum_{j\in [n]} w_{ij}(t) x_j(t-1) - q \,,\, p - q \right\rangle
			\\ & =
			\left\langle \sum_{j\in [n]} w_{ij}(t) x_j(t-1) - \sum_{j\in[n]} w_{ij}(t) q \,,\, p - q \right\rangle
			\\ & =
			\sum_{j\in [n]} w_{ij}(t) \left\langle x_j(t-1) - q \,,\, p - q \right\rangle\\
			& \geq \sum_{j\in M(t)} w_{ij}(t) \left\langle x_j(t-1) - q \,,\, p - q \right\rangle\\
			& \geq
			\alpha \left\langle \hat{x} - q \,,\, p - q \right\rangle
		\end{split}
	\end{equation}
	for $\hat{x} = 1/\left(\sum_{j\in M(t)} w_{ij}(t) \right) \cdot  \sum_{j\in M(t)} w_{ij}(t) x_j(t-1) \in P_{M(t)}(t-1)$ and by applying
	the minimum broadcasting weight assumption $\sum_{j\in M(t)} w_{ij}(t) \geq \alpha$.

	By Lemma~\ref{lem:halfspace:distance:formula},
	it is
	$\langle \hat{x} - q , p - q \rangle \geq \lVert p - q\rVert\dist(\{\hat{x}\},H) \geq \lVert p - q\rVert\dist(P_{M(t)}(t-1),H)$.
	But then \eqref{eq:lem:halfspace:zone:ineqs} implies
	\begin{equation}\label{eq:lem:halfspace:zone:ineq}
		\left\langle x_i(t) - q \, , \, p - q \right\rangle
		\geq
		\alpha \lVert p - q\rVert \dist(P_{M(t)}(t-1),H)
		\enspace.
	\end{equation}
	Since $x_i(t)\not\in H$, another application of
	Lemma~\ref{lem:halfspace:distance:formula}
	transforms~\eqref{eq:lem:halfspace:zone:ineq} into
	\begin{equation*}
		\lVert p - q\rVert \dist(\{x_i(t)\},H)
		\geq
		\alpha \lVert p - q\rVert \dist(P_{M(t)}(t-1),H)
		\enspace.
	\end{equation*}
	Dividing both sides of the inequality by $\lVert p - q\rVert$ and taking the
	minimum over all $i\in [n]$ concludes the proof.
\end{proof}

\subsection{Convergence to Zero Volume}
\label{sec:converge_zero_vol}

As a first main result, we show that being $d$-broadcastable is enough for a message adversary to reduce the dimensionality of process values in $\IR^d$.
We also establish a rate for the contraction to a lower dimension.

\begin{theorem}\label{thm:poss}
	Let dimension $d \geq 1$.
	Every averaging algorithm with minimum broadcasting weight $\alpha > 0$ solves $d$-to-$(d-1)$-dimensional asymptotic subspace consensus in a $d$-broadcastable oblivious message adversary.
	Moreover, we have $\vol(P(t)) \leq \varepsilon$ if $t \geq \alpha^{-d} \log \frac{\vol(P(0))}{\varepsilon}$.
\end{theorem}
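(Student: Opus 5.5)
The plan is to establish a per-round volume contraction
\[
\vol(P(t)) \leq (1-\alpha^d)\,\vol(P(t-1)).
\]
Iterating it gives $\vol(P(t)) \leq (1-\alpha^d)^t \vol(P(0)) \leq e^{-\alpha^d t}\vol(P(0))$. This is at most $\varepsilon$ as soon as $t \geq \alpha^{-d}\log\frac{\vol(P(0))}{\varepsilon}$.

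The remaining claims then follow from standard arguments. Validity is immediate, since the polyhedra $P(t)$ are nested, so $X(t)\subseteq P(0)$ for all $t$. For Subspace Agreement, the nested compact convex sets $P(t)$ converge in Hausdorff distance to $P_\infty=\bigcap_t P(t)$. This set is compact and convex with volume $\lim_t\vol(P(t))=0$, hence contained in some affine hyperplane $E$ of dimension $d-1$. Every $x_i(t)$ lies in $P(t)$, whose points are uniformly close to $P_\infty$ for large $t$, so $\dist(x_i(t),E)\to 0$. The degenerate case where $P(0)$ already has empty interior is trivial, because $P(t)$ then stays inside the hyperplane containing $P(0)$.

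For the contraction step, I fix a round $t$ and let $M=M(t)$ with $|M|\leq d$. The polytope $Q=P_M(t-1)$ then has dimension at most $d-1$ and lies in some hyperplane $L$. Let $u$ be a unit normal of $L$, and let $a<b$ be the minimum and maximum of $\langle x,u\rangle$ over $P(t-1)$. Write $c$ for the common value $\langle y,u\rangle$, $y\in Q$, and set $h=b-a$. The half-spaces $H^+=\{\langle x,u\rangle>b\}$ and $H^-=\{\langle x,u\rangle<a\}$ are both empty of $X(t-1)$. Their distances to $Q$ are $b-c$ and $c-a$, which sum to $h$. Lemma~\ref{lem:halfspace:zone} applied to both half-spaces shows that $P(t)$ lies in the slab $a+\alpha(c-a)\leq\langle x,u\rangle\leq b-\alpha(b-c)$, whose width is $(1-\alpha)h$. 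A single direction only removes a factor $(1-\alpha)$ of the width, not of the volume, so I apply the same lemma to every supporting half-space $H$ of $P(t-1)$. Each such $H$ contains no point of $X(t-1)$, and the lemma shows that $P(t)$ stays at distance at least $\alpha\dist(Q,H)$ from $H$.

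The main obstacle is converting these half-space constraints into a volume loss of at least $\alpha^d\vol(P(t-1))$. My first attempt exploits the fact that $Q$ is flat. Choose a point $q\in Q$. Every point $y\in P(t)$ satisfies $\langle y-q,v\rangle\le(1-\alpha)\max_{P(t-1)}\langle x-q,v\rangle$ for each direction $v$ that attains its supremum away from $Q$. This suggests that $P(t)$ is contained in $P(t-1)$ shrunk by the homothety of ratio $(1-\alpha)$ centred in $Q$, at least in the directions transverse to $L$. I would make this precise with the Steiner-type symmetrization announced in the introduction. Symmetrizing $P(t-1)$ along $u$ preserves the volumes of sections. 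The resulting radius function is concave along the $u$-axis, so a cone from a point of $Q$ to the far facet captures at least an $\alpha^d$-fraction of the volume near the extreme slices. That region is exactly the part excluded by the half-space bounds, because a cone over a $(d-1)$-dimensional base scaled by $\alpha$ in every coordinate has volume ratio $\alpha^d$. If the symmetrization argument gets stuck, I would fall back to the weaker bound $(1-\alpha)h$ on the width. Iterating that bound over directions still forces the volume to zero, though with a worse rate.
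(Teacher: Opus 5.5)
Your outer framework is sound and matches the paper. A per-round factor $1-\alpha^d$ gives the rate via $(1-\alpha^d)^t\le e^{-\alpha^d t}$, nestedness gives Validity, and $\bigcap_t P(t)$ is convex of volume zero and hence lies in a hyperplane $E$. Your Hausdorff argument for $\dist(x_i(t),E)\to 0$ is more explicit than the paper's. The setup of the contraction step is also correct: Lemma~\ref{lem:halfspace:zone} confines $P(t)$ to the slab $a+\alpha(c-a)\le\langle x,u\rangle\le b-\alpha(b-c)$.

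The gap is the step that carries the theorem, namely Lemma~\ref{lem:contract}: turning this slab into $\vol(P(t))\le(1-\alpha^d)\vol(P(t-1))$. You name the right tool but never carry the step out, and what you do write is partly wrong.
\begin{itemize}
\item Believing that the direction $u$ yields only a width bound, you turn to all supporting half-spaces. But for $q\in Q$ and a general unit vector $v$, the lemma only gives $\langle y-q,v\rangle\le(1-\alpha)\max_{P(t-1)}\langle x-q,v\rangle+\alpha\max_{Q}\langle x-q,v\rangle$, and the last term is positive unless $q$ maximizes $v$ on $Q$.
\item Concretely, let $P(t-1)$ be a triangle in $\IR^2$ whose two broadcasting processes sit at the endpoints of an edge and hear only themselves. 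Both endpoints remain in $P(t)$, so $P(t)$ lies in no homothetic copy of $P(t-1)$ of ratio $1-\alpha$ centred in $Q$. Read only transversally to $L$, your claim is just the slab again.
\item Your comparison cone is reversed. A cone with apex in $Q$ and base on the far side is the most favourable profile: its top part of relative height $\alpha$ holds a $1-(1-\alpha)^d$ fraction of its volume. It also degenerates when the far extreme of $P(t-1)$ is a vertex. So it yields no lower bound.
\end{itemize}

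The missing argument needs only the direction $u$. Split $P(t-1)$ along $L$. On the half $\langle x,u\rangle\ge c$ the heights range over $[c,b]$, and $P(t)$ misses the whole cap $\langle x,u\rangle>b-\alpha(b-c)$, i.e.\ the top $\alpha$-fraction of that height range. The radius function $r$ of the symmetrization along $u$ is concave on $[c,b]$ by Brunn--Minkowski. Compare it with the affine function that equals $r$ at the cut level and vanishes at $b$. This is a cone with apex at the far extreme and base towards $L$, as in Lemmas~\ref{lem:concave:line:zero} and~\ref{lem:volumes}. The comparison shows that the cap carries at least an $\alpha^d$-fraction of the volume of that half. (The same two inequalities follow without symmetrization: by convexity, shrinking towards a point of $P(t-1)$ at level $b$, with the appropriate ratios, maps every lower section into the section at the cut level, and that section into every higher one.) The same holds on the other half, and summing gives the factor $1-\alpha^d$. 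Splitting at $L$ is what makes each removed cap an $\alpha$-fraction of the height it is compared with.

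Your fallback does not close the gap. Without such a cap estimate, width bounds in directions $u_t$ that change every round give only $\vol(P(t))\to 0$, via a compactness argument on $\bigcap_t P(t)$, and no rate. So the second claim of the theorem would remain unproved.
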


The remainder of the section is devoted to the proof of the theorem: we show that the $d$-dimensional volume containing the process values converges to $0$.
It follows that the process values lie within a subspace of lower dimension.
The proof also establishes a contraction rate towards zero volume.
An outline of the proof strategy is given in Figure~\ref{fig:panel_2}.

We introduce notation for certain symmetric bodies.
Let $r:\IR \to \IR_+$ be a measurable function.
Then $\Sym(r) \in \IR^d$ is defined as the (full) body obtained by rotating
function $r$ around the first coordinate axis.
Formally, we define:
For a first-axis value $\xi \in \IR$,
write $H_\xi$ for the $(d-1)$-dimensional hyperplane defined by normal vector $(1,0,\dots,0)$ and point $(\xi,0,\dots,0)$.
Let $\Ball_\xi(\rho) \subseteq H_\xi$ be the $(d-1)$-dimensional ball with center $(\xi, 0, \dots, 0)$ and radius $\rho$.
Then $\Sym(r) = \bigcup_{\xi \in \IR}\Ball_\xi(r(\xi))$.

Lemma~\ref{lem:volumes} states lower and upper bounds on the volumes of a partitioning
of such bodies for concave functions $r$.
We start with some technical results, Lemmas~\ref{lem:concave:line} and \ref{lem:concave:line:zero}, (Figure~\ref{fig:panel_2}b--c), which are affine upper and lower bounds to one-dimensional concave functions.
We will later use these to bound the radius function~$r$.

\begin{figure}[htp]
	\centering
	\includegraphics[width=\linewidth]{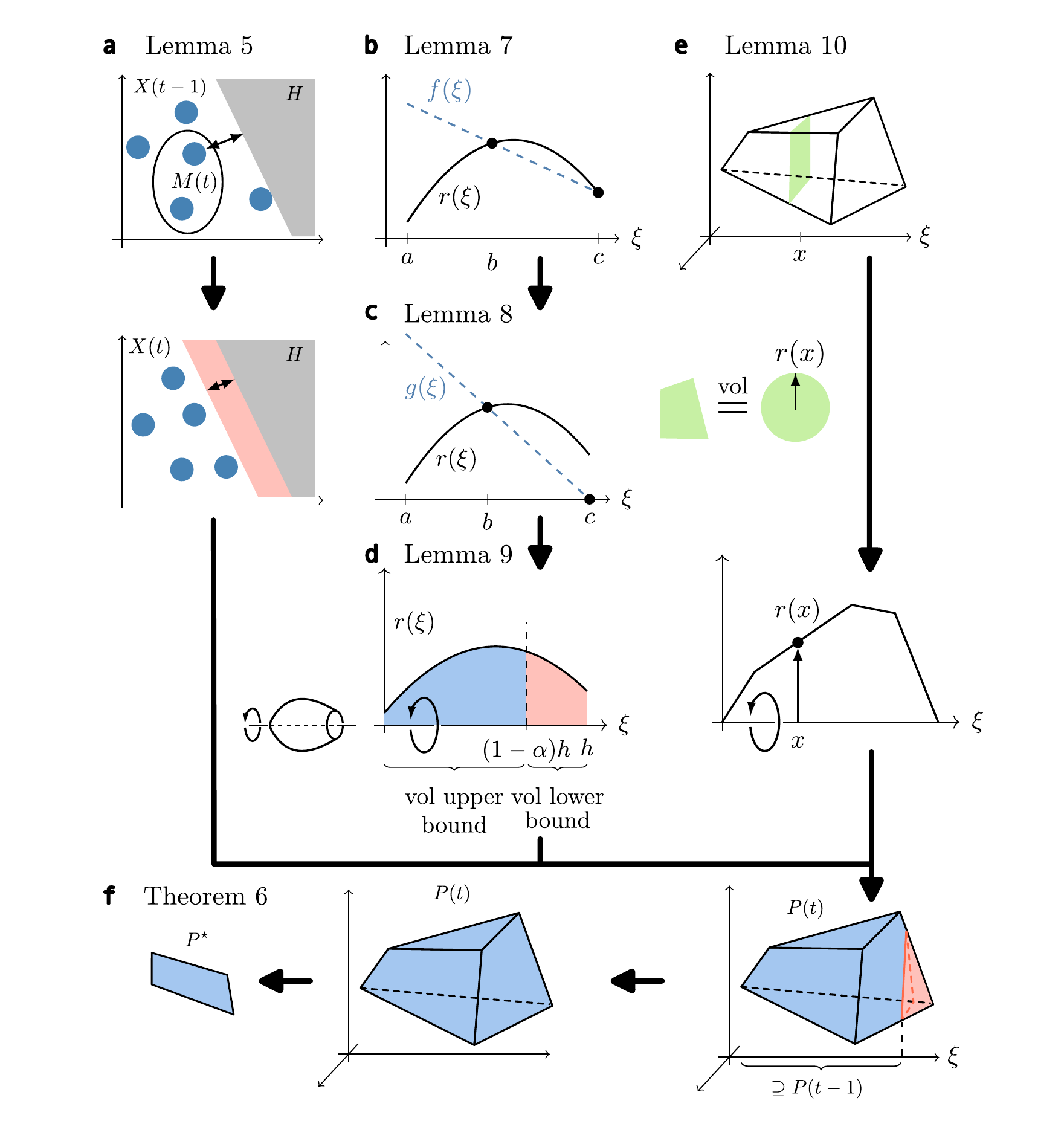}
	\caption{\textbf{Proof strategy for Theorem~\ref{thm:poss}.} \textbf{a} Attracting process values away from a hyperplane $H$. \textbf{b}-\textbf{c} Affine upper and lower bounds for concave radius functions. \textbf{d} Volume upper and lower bounds for segments of symmetric rotational objects. \textbf{e} Steiner-type symmetrization of object to rotational object. Volumes of cuts (green) and segments along the rotational axis are preserved. The resulting radius function is concave. Contraction of the volume in one round is shown. \textbf{f} Final assembly into convergence of the polyhedron of process values $P(t)$ to $P^\star$ which lies on a hyperplane.}
	\label{fig:panel_2}
\end{figure}

The following two lemmas on concave functions are proved in the Appendix.

\begin{lemmarep}[\leanproof{\leanurl{LemConcaveLine.lean}}]\label{lem:concave:line}
	Let $a,b,c \in \IR$ with $a<b<c$
	and let $r : [a,c]\to \IR$ be a concave function.
	Define $f:\IR \to \IR$ to be the unique affine function with $f(b) = r(b)$ and
	$f(c) = r(c)$.
	Then $f(\xi) \geq r(\xi)$ for all $\xi \in [a,b]$ and $f(\xi) \leq r(\xi)$ for
	all $\xi \in [b,c]$.
\end{lemmarep}
\begin{proof}
	It is
	\begin{equation}
		f(\xi)
		=
		\frac{r(b)-r(c)}{c-b} (c- \xi) + r(c)
	\end{equation}
	for all $\xi \in \IR$.

	By concavity of~$r$, whenever
	$\xi \in [a, b]$,
	we have
	$\alpha = (b - \xi)/(c - \xi) \in [0,1]$ and thus
	\begin{equation}
		\begin{split}
			r(b)
			=
			r\big((1-\alpha)\xi + \alpha c\big)
			\geq
			(1-\alpha)r(\xi) + \alpha r(c)
			=
			\frac{c-b}{c-\xi}\big(r(\xi) - r(c)\big) + r(c)
			\enspace,
		\end{split}
	\end{equation}
	which implies
	$f(\xi) \geq r(\xi)$ and proves the first part of the lemma.

	By concavity of~$r$,
	whenever
	$\xi \in [b, c]$,
	we have
	$\beta = (\xi-b)/(c-b) \in [0,1]$
	and thus
	\begin{equation}
		r(\xi)
		=
		r\big( (1-\beta)b + \beta c \big)
		\geq
		(1-\beta) r(b) + \beta r(c)
		=
		f(\xi)
		\enspace,
	\end{equation}
	which proves the second part of the lemma.
\end{proof}

We will actually only need a weaker linear bound (Figure~\ref{fig:panel_2}c) for our purposes:

\begin{lemmarep}[\leanproof{\leanurl{LemConcaveLineZero.lean}}]\label{lem:concave:line:zero}
	Let $a,b,c \in \IR$ with $a<b<c$
	and let $r : [a,c]\to \IR$ be a nonnegative concave function.
	Define $g:\IR \to \IR$ to be the unique affine function with $g(b) = r(b)$ and
	$g(c) = 0$.
	Then $g(\xi) \geq r(\xi)$ for all $\xi \in [a,b]$ and $g(\xi) \leq r(\xi)$ for
	all $\xi \in [b,c]$.
\end{lemmarep}
\begin{proof}
	Using Lemma~\ref{lem:concave:line}, it is sufficient to show that $g(\xi) \geq
		f(\xi)$ for all $\xi\leq b$ and $g(\xi) \leq f(\xi)$ for all $\xi\geq b$
	where~$f$ is the function defined by
	\begin{equation}
		f(\xi)
		=
		\frac{r(b)-r(c)}{c-b} (c- \xi) + r(c)
		\enspace.
	\end{equation}

	We note that
	\begin{equation}
		g(\xi)
		=
		\frac{r(b)}{c-b} (c- \xi)
	\end{equation}
	for all $\xi\in \IR$.

	It is $f(b) = g(b)$.
	The difference of the slopes of~$f$ and~$g$ is equal to $-r(c)/(c-b)$, which is
	nonpositive.
	This shows the claimed inequalities.
\end{proof}

This linear bound allows us to show that any $\alpha$-quantile of a symmetric
body on the first axis (red in Figure~\ref{fig:panel_2}d) contains a non-negligible portion of the total volume.
The proof is given in the Appendix.

\begin{lemmarep}[\leanproof{\leanurl{LemVolumes.lean}}]\label{lem:volumes}
	Let $h\in\IR_+$, let $\alpha \in (0,1)$, and let $r:\IR \to \IR_+$ be a measurable function that is concave in $[0,h]$.
	Define $r_\mathrm{left}, r_\mathrm{right}:\IR\to \IR_+$ by
	\[
		\begin{aligned}
			r_{\mathrm{left}}(\xi)  & =
			\begin{cases}
				r(\xi) & \text{if } \xi \in [0,(1-\alpha)h]\,, \\
				0      & \text{otherwise}
			\end{cases}
			\qquad
			r_{\mathrm{right}}(\xi) & =
			\begin{cases}
				r(\xi) & \text{if } \xi \in [(1-\alpha)h,h]\,, \\
				0      & \text{otherwise}
			\end{cases}
		\end{aligned}
	\]
	Setting $r_0 = r\big((1-\alpha)h\big)$ and $C_{d-1} = \pi^{(d-1)/2}/\Gamma(\frac{d-1}{2}+1)$,
	we have
	\begin{equation*}
		\vol(\Sym(r_\mathrm{left})) = \int_0^{(1-\alpha)h} C_{d-1} r(\xi)^{d-1} \ d\xi
		\leq
		C_{d-1} \frac{r_0^{d-1}h}{\alpha^{d-1}d} \left(1 - \alpha^d\right)
	\end{equation*}
	and
	\begin{equation*}
		\vol(\Sym(r_\mathrm{right})) = \int_{(1-\alpha)h}^{h} C_{d-1} r(\xi)^{d-1} \ d\xi
		\geq
		C_{d-1}\frac{r_0^{d-1}h}{\alpha^{d-1}d} \alpha^d
		\enspace.
	\end{equation*}
\end{lemmarep}
\begin{proof}
	For the volume of a symmetric body $\Sym(r)$ with radius function $r$,
	\[
		\vol_d(\Sym(r)) = \int_{-\infty}^{\infty}\vol_{d-1}\left( \Ball_\xi(r(\xi)) \right) \ d\xi\enspace,
	\]
	where the $(d-1)$-dimensional volume of the balls are given by
	\begin{equation}
		\vol_{d-1}\left( \Ball_\xi(r(\xi)) \right) = C_{d-1} \cdot r(\xi)^{d-1} \enspace. \label{eq:vol:b}
	\end{equation}
	Using the linear function~$g$ from Lemma~\ref{lem:concave:line:zero}, we have
	\begin{equation*}
		\begin{split}
			\int_0^{(1-\alpha)h} r(\xi)^{d-1} \ dx
			& \leq
			\int_0^{(1-\alpha)h} g(\xi)^{d-1} \ dx
			=
			\int_0^{(1-\alpha)h} \frac{r_0^{d-1}}{\alpha^{d-1}h^{d-1}} (h-\xi)^{d-1} \ d\xi
			\\ & =
			\frac{r_0^{d-1}}{\alpha^{d-1}h^{d-1}d} \big(h^d - \alpha^d h^d\big)
			=
			\frac{r_0^{d-1}h}{\alpha^{d-1}d} \left(1 - \alpha^d\right)
		\end{split}
	\end{equation*}
	and
	\begin{equation*}
		\begin{split}
			\int_{(1-\alpha)h}^h r(\xi)^{d-1} \ dx
			& \geq
			\int_{(1-\alpha)h}^h g(\xi)^{d-1} \ dx
			=
			\int_{(1-\alpha)h}^h \frac{r_0^{d-1}}{\alpha^{d-1}h^{d-1}} (h-\xi)^{d-1} \ d\xi
			\\ & =
			\frac{r_0^{d-1}}{\alpha^{d-1}h^{d-1}d} \alpha^d h^d
			=
			\frac{r_0^{d-1}h}{\alpha^{d-1}d} \alpha^d
			\enspace,
		\end{split}
	\end{equation*}
	which concludes the proof.
\end{proof}

We can now show contraction of the volume of $P(t)$.
We will do this by applying a Steiner-type symmetrization to the polyhedron $P(t) \subseteq \IR^d$
obtaining a body $\Sym(r) \in \IR^d$, with some radius function $r$ (see Figure~\ref{fig:panel_2}e).
The body $\Sym(r)$ is obtained by rotating function $r$ around the first coordinate axis.
By construction, the cuts $P(t) \cap H_x$ and $\Sym(r) \cap H_x$, the latter of which is the ball $\Ball_x(r(x))$, have the same $(d-1)$-dimensional volume (green in the figure).
Crucial in the proof is showing that the function $r$ is concave, so that Lemmas~\ref{lem:concave:line} and~\ref{lem:concave:line:zero} can be applied to lower bound the volume shaped off the symmetric body.

\begin{lemma}[\leanformalized{\leanurl{LemContract.lean}}]\label{lem:contract}
	If for round $t \geq 1$, broadcasting set~$M(t)$ fulfills $\lvert M(t)\rvert \leq d$,
	then
	\begin{equation}
		\frac{\vol(P(t))}{\vol(P(t-1))} \leq 1-\alpha^d
		\enspace.
	\end{equation}
\end{lemma}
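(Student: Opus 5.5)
If $\vol(P(t-1))=0$ there is nothing to show, since $P(t)\subseteq P(t-1)$; so assume $P(t-1)$ is full-dimensional. The plan is to find an empty half-space direction along which $P(t)$ is cut away from $P(t-1)$ by a slab of relative width at least $\alpha$. Then the Steiner-type symmetrization together with Lemma~\ref{lem:volumes} shows that this slab carries at least an $\alpha^d$ fraction of the volume.

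\textbf{Choosing the direction.} The broadcasting set has $\lvert M(t)\rvert\le d$ points, so $P_{M(t)}(t-1)$ lies in an affine hyperplane $A$. Take a unit normal $v$ of $A$ and rotate and translate coordinates so that $v=(1,0,\dots,0)$ and $A=H_0$. By choosing the sign of $v$ (or, if $P(t-1)$ lies on only one side of $A$, the corresponding side), $P(t-1)$ has a vertex at first coordinate $h>0$. We may assume the part of $P(t-1)$ with $\xi\ge 0$ is nontrivial and that $h$ is its maximal first coordinate. Consider the open half-space $H=\{\xi>h\}$. Then $X(t-1)\cap H=\emptyset$ and $\dist(P_{M(t)}(t-1),H)=h$. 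Lemma~\ref{lem:halfspace:zone} therefore gives that every $x_i(t)$ has first coordinate at most $(1-\alpha)h$. Hence $P(t)\subseteq P(t-1)\cap\{\xi\le(1-\alpha)h\}$, which removes the cap $P(t-1)\cap\{(1-\alpha)h<\xi\le h\}$.

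\textbf{Symmetrization and concavity.} Define $r(\xi)$ by $C_{d-1}r(\xi)^{d-1}=\vol_{d-1}(P(t-1)\cap H_\xi)$. By Cavalieri's principle, the volumes of $P(t-1)$ and $\Sym(r)$ restricted to any slab $\{a\le\xi\le b\}$ coincide. The key step is that $r$ is concave on the support of the sections, in particular on $[0,h]$. This is the Brunn–Minkowski inequality: for a convex body $K$, the function $\xi\mapsto\vol_{d-1}(K\cap H_\xi)^{1/(d-1)}$ is concave on its support, because the section at a convex combination of heights contains the Minkowski combination of the sections. I expect this to be the main obstacle, in the sense that it is the one nontrivial external ingredient, which I would cite rather than reprove. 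Measurability is clear.

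\textbf{Volume count.} Apply Lemma~\ref{lem:volumes} to $r$ restricted to $[0,h]$. The cap has volume at least $C_{d-1}\frac{r_0^{d-1}h}{\alpha^{d-1}d}\alpha^d$, while the part over $[0,(1-\alpha)h]$ has volume at most $C_{d-1}\frac{r_0^{d-1}h}{\alpha^{d-1}d}(1-\alpha^d)$. So the cap is at least an $\alpha^d$ fraction of $\vol(P(t-1)\cap\{\xi\ge0\})$.

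The remaining difficulty is the part of $P(t-1)$ with $\xi<0$, which this count does not control. The fix I would try first is to also use the opposite half-space $\{\xi<-h'\}$, where $-h'$ is the minimal first coordinate. Lemma~\ref{lem:halfspace:zone} applies on that side as well, so $P(t)$ also loses a cap of relative width $\alpha$ there. On each side the cap is at least an $\alpha^d$ fraction of that side's volume. Summing the two sides gives $\vol(P(t))\le(1-\alpha^d)\vol(P(t-1))$. If one side is empty, the one-sided argument already suffices.
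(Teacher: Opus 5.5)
Your proposal is correct and follows the paper's proof essentially step by step. The paper also places the broadcasting set in a hyperplane, applies Lemma~\ref{lem:halfspace:zone} to the extreme empty half-space on each side, uses Brunn--Minkowski concavity of the Steiner radius function together with Lemma~\ref{lem:volumes} to show each one-sided cap carries an $\alpha^d$ fraction of that side's volume, and then sums over $A_+$ and $A_-$ (the paper's ``by symmetry''). Your explicit treatment of the degenerate case $\vol(P(t-1))=0$ is a small addition that the paper leaves implicit.
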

\begin{proof}
	For brevity, set $P = P(t-1)$ and $P' = P(t)$.
	Let $A\subseteq \IR^d$ be a hyperplane that contains all values of processes of the broadcasting set~$M(t)$.
	Such a hyperplane exists because $\lvert M(t)\rvert \leq d$.
	Without loss of generality, by rotating and translating the coordinate system accordingly,
	we assume that $A = \{z\in\IR^d \mid z_1 = 0\}$.

	\medskip

	\noindent\emph{Symmetrization of output polyhedron.}
	We start with a Steiner-type symmetrization of~$P$ along the first axis.
	That is, we choose the radius function $r:\IR\to\IR_+$ such that
	\begin{align}
		\vol_{d-1}(H_x \cap P) = \vol_{d-1}(H_x \cap \Sym(r)) = \vol_{d-1}(B_x(r(x))) = C_{d-1} \cdot r(x)^{d-1}\label{eq:volume:ball}
	\end{align}
	the latter of which follows from \eqref{eq:vol:b} with $C_{d-1} = \pi^{(d-1)/2}/\Gamma\left(\frac{d-1}{2}+1\right)$.
	One observes that any sections of $P$ and $\Sym(r)$ within some $[a,b]$ along the first axis
	have same volumes:
	\[
		\int_{a}^{b}\vol_{d-1}\left( H_\xi \cap P \right)\ d\xi =
		\int_{a}^{b}\vol_{d-1}\left( H_\xi \cap \Sym(r) \right)\ d\xi
		\enspace.
	\]
	In particular, the total volume of both objects is the same.

	We next show that the radius function $r$ is concave, following the proof
	by Charron-Bost, Függer, and Nowak~\cite{CBFN16:centroid}.
	First, observe that $r$ is zero outside some finite interval $[a,b]$
	by the fact that the polyhedron $P$ is finite.
	We will show that $r$ is concave in $[a,b]$.
	Let $t \in [0,1]$ and $x,y \in [a,b]$, and abbreviate $S =\Sym(r)$.
	By definition of the Minkowski sum,
	\[
		t \left(H_x \cap P\right) + (1-t) \left(H_y \cap P\right)
		= \left\{ t u + (1-t) v \mid u \in H_x \cap P \wedge v \in H_y \cap P \right\}
	\]
	and from the fact that $P$ is convex, for any $u \in H_x \cap P$ and $v \in H_y \cap P$,
	it is $t u + (1-t) v \in P \cap H_{tx + (1-t)y}$, and thus,
	\[
		t \left(H_x \cap P\right) + (1-t) \left(H_y \cap P\right) \subseteq H_{tx + (1-t)y} \cap P \enspace.
	\]
	Consequently,
	\begin{align*}
		\vol_{d-1}\left( H_{tx + (1-t)y} \cap P \right)^{\frac{1}{d-1}}
		 & \geq \vol_{d-1}\left( t \left(H_x \cap P\right)^{\frac{1}{d-1}} +
		(1-t) \left(H_y \cap P\right) \right)^{\frac{1}{d-1}}\enspace.
	\end{align*}
	Applying the Brunn-Minkowski inequality
	\[
		\vol_{d-1}\left( t(H_x \cap P) + (1-t)(H_y \cap P) \right)^{\frac{1}{d-1}}
		\geq t \vol_{d-1}\left( H_x \cap P \right)^{\frac{1}{d-1}} +
		(1-t) \vol_{d-1}\left( H_y \cap P \right)^{\frac{1}{d-1}}\\
	\]
	to the right side yields
	\begin{align*}
		\vol_{d-1}\left( H_{tx + (1-t)y} \cap P \right)^{\frac{1}{d-1}}
		                          & \geq t\vol_{d-1}\left( H_x \cap P \right)^{\frac{1}{d-1}} +
		(1-t)\vol_{d-1}\left( H_y \cap P \right)^{\frac{1}{d-1}}\enspace.                       \\
		\intertext{By construction of the symmetrization,}
		\vol_{d-1}\left( H_{tx + (1-t)y} \cap S \right)^{\frac{1}{d-1}}
		                          & \geq t\vol_{d-1}\left( H_x \cap S \right)^{\frac{1}{d-1}} +
		(1-t) \vol_{d-1}\left( H_y \cap S \right)^{\frac{1}{d-1}}\enspace.
		\intertext{Combining with \eqref{eq:volume:ball}, one obtains for the radius function $r$:}
		r\left(tx + (1-t)y\right) & \geq t \cdot r(x) + (1-t) \cdot r(y)
	\end{align*}
	from which concavity of $r$ in $[a,b]$ follows.

	\medskip

	\noindent\emph{Volume bounds.}
	Next, define the half-spaces
	$A_+ = \{z\in \IR^d \mid z_1 > 0\}$
	and
	$A_- = \{z\in \IR^d \mid z_1 < 0\}$.
	It is sufficient to show
	\begin{equation}\label{eq:contract:in:halfspace}
		{\vol(P'\cap A_+)}
		\leq
		(1-\alpha^d)
		{\vol(P\cap A_+)}
	\end{equation}
	for then
	\begin{equation*}
		\vol(P')
		=
		{\vol(P'\cap A_+)}
		+
		{\vol(P'\cap A_-)}
		\leq
		(1-\alpha^d)
		\big(
		{\vol(P\cap A_+)}
		+
		{\vol(P\cap A_-)}
		\big)
		=
		(1-\alpha^d)
		{\vol(P)}
	\end{equation*}
	by symmetry.

	Define $h = \sup\{ z_1 \mid z\in P\}$.
	The function~$r$ is concave on the interval~$[0,h]$.
	Lemma~\ref{lem:halfspace:zone} applied to the open half-space
	$H = \{ z\in \IR^d \mid z_1 > h\}$, we see that
	$P' \cap (1-\alpha)H = \emptyset$.
	Thus, by the non-expansion property of averaging algorithms, we have
	\begin{equation*}
		\vol(P'\cap A_+) \leq \vol\big((P\cap A_+) \setminus (1-\alpha)H\big) =  \vol(\Sym(r_\mathrm{left}))
	\end{equation*}
	where~$r_\mathrm{left}$ and~$r_\mathrm{right}$ are defined as in Lemma~\ref{lem:volumes}.
	We thus conclude
	\begin{equation*}
		\begin{split}
			\frac{\vol(P'\cap A_+)}{\vol(P\cap A_+)}
			& \leq
			\frac{\vol(\Sym(r_\mathrm{left}))}{\vol(\Sym(r_\mathrm{left})) + \vol(\Sym(r_\mathrm{right}))}
			=
			\frac{1}{1 + \vol(\Sym(r_\mathrm{right}))/ \vol(\Sym(r_\mathrm{left}))}
			\\
			& \leq
			\frac{1}{1 + \alpha^d/(1-\alpha^d)}
			=
			1 - \alpha^d
			\enspace,
		\end{split}
	\end{equation*}
	which proves~\eqref{eq:contract:in:halfspace}
	and thus the lemma.
\end{proof}

Lemma~\ref{lem:contract} provides everything to establish contraction of the volume to zero.
To allow us to conclude that this indeed implies a lower dimensionality, we use the following lemma, whose proof is given in the Appendix.

\begin{lemmarep}[\leanproof{\leanurl{LemConvexZeroHyperplane.lean}}]\label{lem:convex:zero:hyperplane}
	Let $C \subseteq \mathbb{R}^d$ be a convex set.
	The following are equivalent:
	\begin{enumerate}
		\item There exists an affine hyperplane $E \subseteq \mathbb{R}^d$ such that $C \subseteq E$.
		\item $\vol(C) = 0$.
	\end{enumerate}
\end{lemmarep}
\begin{proof}
	($\Rightarrow$)
	If $C \subseteq E$ for some affine hyperplane $E$, then $E$ has Lebesgue measure zero in $\mathbb{R}^d$.
	Hence $\vol(C) = 0$.

	\smallskip

	($\Leftarrow$)
	By means of contradiction, assume that $C$ is not contained in any affine hyperplane.
	Then there exist $d+1$ affinely independent points $x_0,\dots,x_d \in C$, i.e.,
	$d$ vectors $x_1 - x_0,\dots,x_d-x_0$ that are linearly independent.
	Their convex hull
	$
		\poly(\{x_0,\dots,x_d\})
	$ is a simplex and has volume $\vol(\poly(\{x_0,\dots,x_d\})) = |\det([x_1 - x_0,\dots,x_d-x_0])|/d!$.
	From the linear independence we have $|\det([x_1 - x_0,\dots,x_d-x_0])| > 0$
	from which it follows that $\vol(\poly(\{x_0,\dots,x_d\})) > 0$.
	Since $C$ is convex and contains all points $x_i$, we have $\poly(\{x_0,\dots,x_d\}) \subseteq C$.
	Consequently,
	$
		\vol(C) \ge \vol(\poly(\{x_0,\dots,x_d\})) > 0,
	$
	which contradicts the assumption $\vol(C)=0$.
\end{proof}

\begin{proof}[Proof of Theorem~\ref{thm:poss}]
	We are now in the position to assemble everything to prove Theorem~\ref{thm:poss}.
	By Lemma~\ref{lem:contract}, we have $\vol(P(t)) \to 0$ as $t\to\infty$.
	Since the sets~$P(t)$ are non-increasing for any averaging algorithm, this is equivalent to the limit set $P^* = \lim_{t\to\infty} P(t) = \bigcap_{t\geq 0} P(t)$ being contained in some affine hyperplane $E\subseteq \IR^d$.

	The set~$P^*$ is convex as the intersection of the convex sets~$P(t)$.
	By Lemma~\ref{lem:contract}, we have
	\begin{equation*}
		\vol(P^*) = \lim_{t\to\infty} \vol(P(t)) \leq \lim_{t\to\infty} \left(1 - \alpha^d \right)^t \cdot \vol(P(0)) = 0
		\enspace.
	\end{equation*}
	Lemma~\ref{lem:convex:zero:hyperplane} now shows the existence of the hyperplane~$E$ such that $P^*\subseteq E$.
\end{proof}

From the reductions of $d$-rooted to $d$-broadcastable oblivious message adversaries (Theorem~\ref{thm:rooted:to:broadcastable}),
and the speed-up construction of alternating sufficiently long consecutive rounds of relaying messages with an averaging round \cite{CBFN16:icalp}, one obtains:

\begin{corollary}\label{cor:bound}
	Let $0\leq \alpha < 1$.
	In a $d$-rooted oblivious message adversary, every $\alpha$-safe averaging algorithm solves $d$-to-$(d-1)$-dimensional asymptotic subspace consensus.
	Moreover, alternating $\frac{\pi^2+6}{6}n + 1$ rounds of relaying messages with one round of an $\alpha$-safe averaging algorithm, solves $d$-to-$(d-1)$-dimensional asymptotic subspace consensus.
	Moreover, we have $\vol(P(t)) \leq \varepsilon$ if $t \geq \left(\frac{\pi^2+6}{6}n + 1\right) \alpha^{-d} \log \frac{\vol(P(0))}{\varepsilon}$.
\end{corollary}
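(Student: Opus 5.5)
The plan is to reduce Corollary~\ref{cor:bound} to Theorem~\ref{thm:poss} by grouping rounds into macro-rounds. Set $L = \lceil \frac{\pi^2+6}{6}n + 1\rceil$. I expect the statement to be intended for $0<\alpha<1$, since $\alpha$-safety is defined with $\alpha>0$.

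\textbf{Relay-then-average algorithm.} Partition the execution into macro-rounds of $L$ consecutive rounds. Within a macro-round, processes first relay, i.e., flood, the values they held at the start of the macro-round for $L$ rounds. They then perform one $\alpha$-safe averaging step over all values received. Only this step changes their output; the relay rounds keep the output fixed.

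\textbf{Effective graph.} A process $i$ receives the start-of-macro-round value of $j$ exactly when $(j,i)$ is an edge of the product $G = G_1\circ\cdots\circ G_L$ of the macro-round's graphs. Since self-loops are present, relaying over more rounds only enlarges the product graph, and the rounds beyond $\frac{\pi^2+6}{6}n+1$ do no harm. By Theorem~\ref{thm:rooted:to:broadcastable}, this product is $d$-broadcastable, with some broadcasting set $M$ of size at most $d$. Every process thus receives a value from some $j\in M$ and gives it weight at least $\alpha$ by $\alpha$-safety. So the sequence of macro-round outputs is an execution of an averaging algorithm with minimum broadcasting weight $\alpha$ in the $d$-broadcastable oblivious message adversary of all such products.

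\textbf{Applying Theorem~\ref{thm:poss}.} The theorem gives Subspace Agreement onto a hyperplane and Validity for the macro-round sequence. Intermediate outputs equal the output at the start of their macro-round, so convergence onto the hyperplane and onto the convex hull of the initial vectors carries over to all rounds. Since $P(t)$ is non-increasing, Lemma~\ref{lem:contract} gives $\vol(P(t))\le(1-\alpha^d)^{\lfloor t/L\rfloor}\vol(P(0))$. Using $(1-\alpha^d)^m\le e^{-m\alpha^d}$ yields $\vol(P(t))\le\varepsilon$ once $t\ge L\,\alpha^{-d}\log\frac{\vol(P(0))}{\varepsilon}$. The rounding of $L$ and of $t/L$ needs a little care but is routine.

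\textbf{Plain averaging, the main obstacle.} For the first claim, where every round is an $\alpha$-safe averaging step with no relaying, I compose $L$ consecutive update matrices. The product $W(t+L)\cdots W(t+1)$ is again stochastic, and its entry $(i,j)$ is at least $\alpha^L$ whenever $(j,i)$ is an edge of the product graph, because the weights along a path multiply. So this composed step is $\alpha^L$-safe on a $d$-broadcastable graph. Applying Theorem~\ref{thm:poss} to it gives convergence onto a hyperplane, with the much weaker rate $\alpha^{Ld}$, which the corollary does not quantify anyway.

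The hard part is making sure the broadcasting set is well defined per macro-round. The theorem and Lemma~\ref{lem:halfspace:zone} only require a broadcasting set $M(t)$ in each step's graph, so an arbitrary choice in each product graph suffices.
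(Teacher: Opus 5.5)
Your proposal is correct and is essentially the paper's one-line justification spelled out. You use blocks of $L=\lceil\frac{\pi^2+6}{6}n+1\rceil$ rounds, which give $d$-broadcastable product graphs by Theorem~\ref{thm:rooted:to:broadcastable}. The block step has positive minimum broadcasting weight, whether it is relay-then-average or the composed $\alpha$-safe matrices with entries at least $\alpha^L$ on product edges, and Theorem~\ref{thm:poss} with Lemma~\ref{lem:contract} then finishes the argument.

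The one point you undersell is the rounding, which is not routine. Your bound only counts the $\lfloor t/L\rfloor$ completed blocks, so the threshold as stated can actually fail. For $\varepsilon$ just below $\vol(P(0))>0$ the threshold is smaller than $L$, yet with your block structure no averaging step happens before round $L$, so $\vol(P(t))=\vol(P(0))>\varepsilon$ there. What your argument really gives is $t\geq L\lceil\alpha^{-d}\log\frac{\vol(P(0))}{\varepsilon}\rceil$. This is a looseness in the statement itself, not a flaw in your approach.
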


For example, by observing that the equal neighbor averaging algorithm is $1/n$-safe, Corollary~\ref{cor:bound} holds with $\alpha = 1/n$ for this algorithm with a time in $O\left(n^{d+1} \log \frac{\vol(P(0))}{\varepsilon}\right)$.

\subsection{Convergence to Lower-dimensional Subspace}
\label{sec:converge_lowdim}

Theorem~\ref{thm:poss} showed a contraction to lower dimensionality than the initial dimensionality~$d$.
In this section we show that the dimensionality reduction indeed matches the lower bound in Section~\ref{sec:problem}:

\begin{theorem}[\leanproof{\leanurl{ThmSolving.lean}}]\label{thm:solving}
	Let $d \geq k \geq 1$.
	Every averaging algorithm with minimum broadcasting weight $\alpha > 0$ solves $d$-to-$(k-1)$-dimensional asymptotic subspace consensus in a $k$-broadcastable oblivious message adversary.
\end{theorem}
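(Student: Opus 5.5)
The plan is to identify the limit set of the process values and show, by projecting down to its own affine hull and invoking Theorem~\ref{thm:poss}, that its affine dimension is at most $k-1$.

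\emph{Validity and the limit set.} Each update $x_i(t)=\sum_j w_{ij}(t)x_j(t-1)$ is a convex combination of the previous outputs. Hence $P(t)\subseteq P(t-1)$ for all $t\ge 1$, and Validity holds trivially. Let $P^*=\bigcap_{t\ge0}P(t)$. It is a non-empty compact convex set, as a nested intersection of non-empty compact convex polytopes.

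\emph{Convergence onto $P^*$.} I would first show $\dist(\{x_i(t)\},P^*)\to 0$ for every $i$ by a compactness argument. Suppose not. Then there are $\varepsilon>0$ and a subsequence with $\dist(\{x_i(t_\ell)\},P^*)\ge\varepsilon$. This subsequence has a limit point $y$ in the compact set $P(0)$. Since $x_i(t_\ell)\in P(t)$ for all $t_\ell\ge t$ and $P(t)$ is closed, $y\in P(t)$ for every $t$, so $y\in P^*$. This contradicts $\dist(\{y\},P^*)\ge\varepsilon$. It therefore suffices to show that the affine hull of $P^*$ has dimension $m\le k-1$. If $m<k-1$, we enlarge it to any $(k-1)$-dimensional affine subspace $E\supseteq P^*$, which is possible since $k-1\le d-1<d$.

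\emph{Main step: $m\le k-1$.} Suppose instead that $m\ge k$. Let $L$ be the affine hull of $P^*$ and let $\pi:\IR^d\to\IR^m$ be an affine map that restricts to an isometry from $L$ onto $\IR^m$ (for instance, orthogonal projection onto the direction space of $L$ in suitable coordinates). Because $\pi$ is affine, the projected values $y_i(t)=\pi(x_i(t))$ satisfy the same recurrence $y_i(t)=\sum_j w_{ij}(t)y_j(t-1)$. Thus they form an execution, in $\IR^m$, of the same averaging algorithm under the same communication graphs, with the same weights and hence the same minimum broadcasting weight $\alpha$. Each graph is $k$-broadcastable, hence $m$-broadcastable since $k\le m$ and broadcasting sets are of size at most $k$. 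So Lemma~\ref{lem:contract}, i.e.\ Theorem~\ref{thm:poss} in dimension $m$, yields $\vol_m(\poly(\{y_i(t)\}))\to0$. On the other hand, $\poly(\{y_i(t)\})=\pi(P(t))\supseteq\pi(P^*)$, and $\pi(P^*)$ is a convex set in $\IR^m$ not contained in any affine hyperplane. By Lemma~\ref{lem:convex:zero:hyperplane} it has positive $m$-volume, which is a contradiction. The case $m=0$ needs no argument, since then $k-1\ge0\ge m$.

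\emph{Expected obstacle.} The main point to get right is that the projection genuinely yields a valid execution to which Theorem~\ref{thm:poss} applies. This requires that averaging commutes with affine maps, which holds because the weights sum to~$1$, and that the hypotheses of Lemma~\ref{lem:contract} only involve $|M(t)|\le m$ and the weights. The compactness step for convergence onto $P^*$ is routine but should be stated explicitly, since $\dist$ is not a metric.
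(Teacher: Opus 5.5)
Your proof is correct, and it takes a genuinely different route from the paper.

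\emph{The paper's route.} It does not use the volume machinery here. It works with the thickness $\Delta_{\Pi_t}$ relative to the projection $\Pi_t$ onto $\dir(X_{M(t)}(t-1))^\perp$. From Lemma~\ref{lem:differences} it gets the one-round bound $\Delta_{\Pi_t}(X(t))\le(1-\alpha)\Delta_{\Pi_t}(X(t-1))$. Because $\Pi_t$ moves, it passes to an accumulation point $\Pi^\star$ (Lemma~\ref{lem:accumulation_point}). Lipschitz continuity in $\Pi$ (Lemma~\ref{lem:continuity}) and a perturbed recursion then give $\Delta_{\Pi^\star}(X(t))\to 0$. The limit subspace is a translate of $\ker\Pi^\star$, of dimension at most $k-1$.

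\emph{Your route.} You obtain Theorem~\ref{thm:solving} as a corollary of Theorem~\ref{thm:poss}. You project onto the affine hull of $P^*=\bigcap_t P(t)$ and let Lemma~\ref{lem:contract}, applied in dimension $m\ge k$, collide with Lemma~\ref{lem:convex:zero:hyperplane}.

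\emph{What yours buys.} Your argument is shorter. It produces a canonical limit object: every $x_i(t)$ converges onto $P^*$, and the subspace is its affine hull. It makes explicit the passage from contraction to a \emph{fixed} subspace, which the paper handles in one line. As a by-product, the image of $P(t)$ under any orthogonal projection onto a $k$-dimensional subspace loses $k$-volume by a factor $1-\alpha^k$ per round.

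\emph{What the paper's buys.} It is independent of the Steiner-symmetrization and Brunn--Minkowski argument behind Lemma~\ref{lem:contract}, using only linear algebra and compactness. Its one-round factor $1-\alpha$ does not degrade with $d$ or $k$, although it holds only for a moving projection. It also exposes the mechanism: the limit direction is an accumulation point of the broadcasting sets' direction spaces.

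Neither argument gives a rate toward the fixed $(k-1)$-dimensional subspace, since both end in a non-quantitative compactness or contradiction step.

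A minor wording point: if the weights depend on the received values, the projected sequence is not literally an execution of the same algorithm. As you note yourself, though, Lemma~\ref{lem:contract} only uses the recurrence, the weights, and $|M(t)|\le m$, so the argument stands.
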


\begin{figure}[bth]
	\centering
	\includegraphics[width=\linewidth]{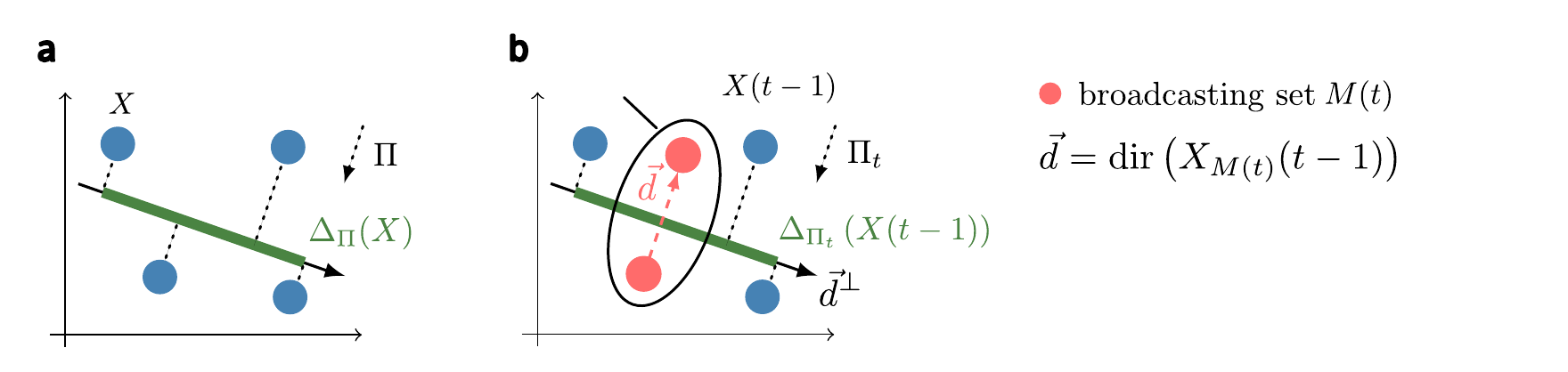}
	\caption{\textbf{Thickness $\Delta$ with respect to projection.} \textbf{a} Points $X \subseteq \IR^d$ (in blue) are projected orthogonally via $\Pi$ (dashed). The thickness is the diameter (green) of the projected image.
		\textbf{b} Same as in \textbf{a}, for the case $X = X(t-1)$ and $\Pi$ is chosen as $\Pi_t$, i.e., in parallel to the direction ($\vec{d}$) of the outputs in round $t-1$ of those processes in the broadcasting $M(t)$.
		Here, $|M(t)| = 2$ processes, with outputs in red).}
	\label{fig:panel_3}
\end{figure}

The remainder of the section is devoted to showing this results.
The proof idea is to decompose process outputs (Lemma~\ref{lem:xi_xip}) and differences of process outputs (Lemma~\ref{lem:differences}) into components that are within or parallel to the polyhedron of process outputs of the broadcasting set, as well as a remainder.
We then define a so-called thickness with respect to an orthogonal projection (Figure~\ref{fig:panel_3}a) and orthogonal projections onto the linear space spanned by the process outputs in the broadcasting set in Definition~\ref{def:thickness}.
We combine these to follow the dynamics of the thickness with respect to the latter projection (Figure~\ref{fig:panel_3}b) during an execution.

We then show (Lemmas~\ref{lem:accumulation_point} and~\ref{lem:continuity}) that the orthogonal projections constructed from the broadcasting set have an accumulation point and the thickness is continuous in the projections, allowing us to measure contraction of the thickness with respect to this accumulation point (Lemma~\ref{lem:Delta-contraction}).

We start with the decomposition of node values into convex combinations of values from the broadcasting set
and all node values; the proof is given in the Appendix.

\begin{lemmarep}[\leanproof{\leanurl{LemXiXip.lean}}]\label{lem:xi_xip}
	Assume an averaging algorithm with minimum broadcasting weight $\alpha > 0$.
	For any $i \in [n]$ and round $t \geq 1$, there exist points $\xi(t-1) \in P_{M(t)}(t-1)$  and $\xi'(t-1) \in P(t-1)$, such that
	$
		x_i(t) = \alpha \cdot \xi(t-1) + (1 - \alpha) \cdot \xi'(t-1)
	$.
\end{lemmarep}
\begin{proof}
	Let $t \geq 1$.
	Let $w_M = \sum_{j \in M(t)} w_{ij}(t)$.
	From the algorithm, we have
	\begin{align*}
		x_i(t) & = \sum_{j \in [n]} w_{ij}(t) x_j(t-1)                                                       \\
		       & = \sum_{j \in [n] \setminus M(t)} w_{ij}(t) x_j(t-1) + \sum_{j \in M(t)} w_{ij}(t) x_j(t-1)
	\end{align*}
	We can rewrite
	\begin{align*}
		\sum_{j \in M(t)} w_{ij}(t) x_j(t-1) & = w_M \sum_{j \in M(t)} \frac{w_{ij}(t)}{w_M} x_j(t-1)
		= w_M \cdot \xi
	\end{align*}
	for some point $\xi$ in the affine root plane by the fact that $\sum_{j \in M(t)}\frac{w_{ij}(t)}{w_M} = 1$.
	If $w_M = 1$, we set $\hat{\xi} = \xi$.
	Otherwise, we have
	\begin{align*}
		\sum_{j \in [n] \setminus M(t)} w_{ij}(t) x_j(t-1) & = (1 - w_M) \sum_{j \in [n] \setminus M(t)} \frac{w_{ij}(t)}{1 - w_M} x_j(t-1)
		= (1 - w_M) \cdot \hat{\xi}
	\end{align*}
	for some convex combination $\hat{\xi}$ of points $x_j(t-1)$.
	From this, we have that
	$$
		x_i(t) = w_M \cdot \xi + (1 - w_M) \cdot \hat{\xi} \enspace
	$$
	which we can rewrite as
	$$
		x_i(t) = \alpha \cdot \xi + (1-\alpha)\left(
		\frac{w_M - \alpha}{1 - \alpha} \cdot \xi + \frac{1 - w_M}{1-\alpha} \cdot \hat{\xi}
		\right)\enspace.
	$$
	The right term is a convex combination of points $\xi$ and $\hat{\xi}$, both of which are a convex combination of points $x_j(t-1)$.
	Thus it is a convex combination of points $x_j(t-1)$.
	The lemma's statement follow from setting $\xi(t-1) = \xi$ and $\xi'(t-1) = \frac{w_M - \alpha}{1 - \alpha} \cdot \xi + \frac{1 - w_M}{1-\alpha} \cdot \hat{\xi}$.
\end{proof}

If not stated otherwise, we consider the inner product space $\IR^d$ with the inner product $\bra \cdot,\cdot \ket$.
For a set $X \subseteq \IR^d$, we write $\linspan(X)$ for the span of $X$, i.e., the linear sub-space of $\IR^d$ of all linear combinations of vectors in $X$.
For a set $X$ of points, we write $\dir(X)$ for the direction space $\linspan(\{x - y \mid x,y \in X\})$.
Of particular interest will be $\dir(X_{M(t)}(t-1))$ which has some dimensionality $m \leq k-1$.
For a sub-space $X$ of the inner product space $\IR^d$, we write $X^\bot$ for the orthogonal complement
of $X$, i.e., the set of vectors $v \in \IR^d$ that are orthogonal ($\bra v, v' \ket = 0$) to all vectors $v' \in X$.

Using the decomposition for two node values in Lemma~\ref{lem:xi_xip}, we can also decompose their difference as shown in the following lemma.

\begin{lemma}[\leanproof{\leanurl{LemDifferences.lean}}]\label{lem:differences}
	Assume an algorithm with minimum broadcasting weight $\alpha > 0$.
	For any $i,j \in [n]$ and $t \geq 1$, if the broadcasting set fulfills $|M(t)| \geq 1$, there exists a vector $u_{\parallel}(t-1) \in \dir(X_{M(t)}(t-1))$, and a residual vector $u_\text{res}(t-1) \in \{x-y \mid x,y \in P(t-1)\}$, such that
	$
		x_i(t)-x_j(t) = \alpha \cdot u_{\parallel}(t-1) + (1 - \alpha) \cdot u_\text{res}(t-1)
	$.
\end{lemma}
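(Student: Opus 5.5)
The plan is to apply Lemma~\ref{lem:xi_xip} separately to processes $i$ and $j$ in round $t$ and subtract the two decompositions. Lemma~\ref{lem:xi_xip} gives points $\xi_i, \xi_j \in P_{M(t)}(t-1)$ and $\xi'_i, \xi'_j \in P(t-1)$ with
\[
	x_i(t) = \alpha \xi_i + (1-\alpha)\xi'_i
	\quad\text{and}\quad
	x_j(t) = \alpha \xi_j + (1-\alpha)\xi'_j .
\]
The common coefficient $\alpha$ is what makes this work: the decomposition uses exactly weight $\alpha$ on the broadcasting part for every process, so the differences line up term by term. Subtracting gives
\[
	x_i(t)-x_j(t) = \alpha(\xi_i-\xi_j) + (1-\alpha)(\xi'_i-\xi'_j).
\]
I then set $u_\parallel(t-1) = \xi_i - \xi_j$ and $u_\text{res}(t-1) = \xi'_i - \xi'_j$.

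The residual claim is immediate, since $\xi'_i,\xi'_j\in P(t-1)$ means $u_\text{res}(t-1)\in\{x-y \mid x,y\in P(t-1)\}$.

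For the parallel component, I must show that the difference of two points of $P_{M(t)}(t-1)$ lies in $\dir(X_{M(t)}(t-1))$. Write $\xi_i=\sum_{m\in M(t)}\lambda_m x_m(t-1)$ and $\xi_j=\sum_{m\in M(t)}\mu_m x_m(t-1)$ with convex weights. Fix any $m_0\in M(t)$; this is where the hypothesis $|M(t)|\ge 1$ is used. Since $\sum_m \lambda_m = \sum_m \mu_m = 1$,
\[
	\xi_i-\xi_j = \sum_{m\in M(t)} (\lambda_m-\mu_m)\bigl(x_m(t-1)-x_{m_0}(t-1)\bigr).
\]
This is a linear combination of differences of points of $X_{M(t)}(t-1)$, hence it lies in the span $\dir(X_{M(t)}(t-1))$.

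There is one edge case. Lemma~\ref{lem:xi_xip} divides by $1-\alpha$, so when $\alpha=1$ it is not directly applicable. In that case every $x_i(t)$ lies in $P_{M(t)}(t-1)$, and I take $u_\parallel(t-1) = x_i(t)-x_j(t)$ and $u_\text{res}(t-1)=0$, which is a valid element of the residual set. The main obstacle is otherwise only this bookkeeping, namely re-expressing the difference of convex combinations relative to a base point $m_0$. There is no real analytic difficulty.
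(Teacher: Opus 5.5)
Your proposal is correct and is essentially the paper's proof: apply Lemma~\ref{lem:xi_xip} to $x_i(t)$ and $x_j(t)$, subtract, and take $u_\parallel(t-1)=\xi_i-\xi_j$ and $u_\text{res}(t-1)=\xi'_i-\xi'_j$. Your base-point argument fills in a step the paper only gestures at: the paper cites $x-y\in\dir(X)$ for $x,y\in X$, whereas $\xi_i,\xi_j$ lie in the convex hull $P_{M(t)}(t-1)$. Your separate $\alpha=1$ case is harmless but not needed, since the statement of Lemma~\ref{lem:xi_xip} still holds for $\alpha=1$ with any $\xi'\in P(t-1)$.
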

\begin{proof}
	Follows from the fact that for $x,y \in X \subseteq \IR^d$, it is $x-y \in \dir(X)$ and
	application of Lemma~\ref{lem:xi_xip} to both points in the difference $x_i(t)-x_j(t)$.
\end{proof}

\begin{definition}[Projection $\Pi_t$ and Thickness $\Delta$\leanformalized{\leanurl{DefThickness.lean}}]\label{def:thickness}
	Consider an averaging algorithm with process values in $\IR^d$.
	For a round $t\ge 1$ with broadcasting set $M(t) \neq \emptyset$ let
	$
		\Pi_t : \mathbb{R}^d \to \mathbb{R}^d
	$
	denote the orthogonal projection onto $\dir(X_{M(t)}(t-1))^\perp$.

	Let $X \subseteq \IR^d$ be a non-empty finite set.
	Let $\Pi: \IR^d \to \IR^d$ be an orthogonal projection.
	Define the \emph{thickness of $X$ under $\Pi$} by
	$
		\Delta_\Pi(X) := \max_{x, y \in X} \| \Pi ( x - y ) \|
	$.
\end{definition}
See also Figure~\ref{fig:panel_3} for a visualization.
From the fact that $\Pi$ is an orthogonal projection, we have $\| \Pi( x - y ) \| \leq \| x - y \|$.

\begin{lemma}[\leanproof{\leanurl{LemAccumulationPoint.lean}}]\label{lem:accumulation_point}
	Assume an averaging algorithm executed in a $k$-broadcastable oblivious message adversary for some $k \geq 1$.
	Then the sequence $(\Pi_t)_{t \geq 1}$ has an accumulation point that is an orthogonal projection in $\IR^d$.
\end{lemma}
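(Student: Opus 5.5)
The plan is a compactness argument in the space of linear maps on $\IR^d$. First, for every round $t\ge1$ with $M(t)\neq\emptyset$, $\Pi_t$ is the orthogonal projection onto the subspace $\dir(X_{M(t)}(t-1))^\perp$. Hence $\Pi_t$ is a symmetric idempotent matrix, $\Pi_t=\Pi_t^\top=\Pi_t^2$. In a $k$-broadcastable adversary with $k\ge1$ we may always take $M(t)$ nonempty: if a broadcasting set happens to be empty, which is only possible when $n=0$, add any process. So $\Pi_t$ is defined for all $t\geq 1$.

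Second, the sequence is bounded. View each $\Pi_t$ as a $d\times d$ real matrix in $\IR^{d\times d}\cong\IR^{d^2}$ with the operator norm. Every orthogonal projection satisfies $\|\Pi_t\|\le 1$, since $\|\Pi v\|\le\|v\|$, as noted after Definition~\ref{def:thickness}. All norms on this finite-dimensional space are equivalent, so $(\Pi_t)_t$ lies in a compact ball. By Bolzano--Weierstrass it has a convergent subsequence $\Pi_{t_\ell}\to\Pi^\star$.

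Third, the limit is again an orthogonal projection. The set $\mathcal P=\{\Pi\in\IR^{d\times d}\mid \Pi^2=\Pi,\ \Pi^\top=\Pi\}$ is closed, because matrix multiplication and transposition are continuous. Passing to the limit in $\Pi_{t_\ell}^2=\Pi_{t_\ell}$ and $\Pi_{t_\ell}^\top=\Pi_{t_\ell}$ gives $(\Pi^\star)^2=\Pi^\star$ and $(\Pi^\star)^\top=\Pi^\star$. A symmetric idempotent matrix is exactly the orthogonal projection onto its range, since $\ker\Pi^\star=\operatorname{range}(\Pi^\star)^\perp$ follows from symmetry. Thus $\Pi^\star$ is an orthogonal projection.

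One can refine this further. The rank of $\Pi_t$ is $d-\dim\dir(X_{M(t)}(t-1))$, which takes values in $\{d-k+1,\dots,d\}$, a finite set. By pigeonhole one can first restrict to a subsequence of constant rank. Rank equals trace for projections and trace is continuous, so $\Pi^\star$ then has that same rank; this may be useful later. The only step needing care, and it is mild, is the closedness and characterization of orthogonal projections as symmetric idempotents. In a Lean or Mathlib setting, the main obstacle would instead be finding the right compactness API: a bounded set in a finite-dimensional normed space is relatively compact, plus closedness of $\mathcal P$.
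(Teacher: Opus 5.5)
Your proof is correct and follows the paper's argument. You identify orthogonal projections with symmetric idempotent matrices, show this set is bounded (the paper uses eigenvalues in $\{0,1\}$, you use $\|\Pi v\|\le\|v\|$) and closed, and conclude by compactness. Your rank-equals-trace refinement is also correct and useful: it justifies the claim, used in Lemma~\ref{lem:Delta-contraction} but not proved by the paper, that the limit $\Pi^\star$ has kernel of dimension at most $k-1$.
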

\begin{proof}
	The set $\cal P$ of orthogonal projections in $\IR^d$ is identified with the set of matrices $\{ \Pi \in \IR^{d \times d} \mid \Pi^2 = \Pi = \Pi^\top \}$.
	For any $\Pi \in {\cal P}$, eigenvalues are in $\{0, 1\}$ and thus $\|\Pi\|_\text{op} \in \{0, 1\}$.
	It follows that $\cal P$ is bounded.
	Further, if a sequence $(Q_n)_{n \geq 0}$ with $Q_n \in {\cal P}$ has limit $Q^\star$, then ${Q^\star}^2 = {Q^\star} = {Q^\star}^\top$, from which closedness of $\cal P$ follows.
	Thus $\cal P$ is compact.

	From the compactness of $\cal P$, it follows that the sequence $(\Pi_t)_{t \geq 1}$, with each $\Pi_t \in {\cal P}$, has an accumulation point in ${\cal P}$.
\end{proof}

The following lemma is proved in the Appendix.

\begin{lemmarep}[\leanproof{\leanurl{LemContinuity.lean}}]\label{lem:continuity}
	For any finite non-empty set $X \subseteq \IR^d$ and with ${\cal P}$ being the set of orthogonal projections in $\IR^d$, the function $\Pi \mapsto \Delta_\Pi(X) : {\cal P} \to \IR$ is Lipschitz continuous.
\end{lemmarep}
\begin{proof}
	Let $\Pi,Q \in {\cal P}$ and $x,y \in X$.
	Then from the inverse triangle inequality and the definition of the operator norm,
	\[
		| \|\Pi(x - y)\| - \|Q(x-y)\| | \leq \| (\Pi-Q)(x-y) \| \leq \|(\Pi-Q)\|_\text{op} \cdot \|x-y\|\enspace.
	\]
	Maximizing over all $x,y \in X$, yields
	\[
		|\Delta_\Pi(X) - \Delta_Q(X)| \leq \|(\Pi-Q)\|_\text{op} \cdot \max_{x,y \in X}\|x-y\|\enspace;
	\]
	and thus Lipschitz continuity.
\end{proof}

\begin{lemma}[Contraction of thickness\leanproof{\leanurl{LemDeltaContraction.lean}}]
	\label{lem:Delta-contraction}
	Let $k \geq 1$.
	Assume an averaging algorithm executed in a $k$-broadcastable oblivious message adversary with minimum broadcasting weight $\alpha>0$.
	Then there exists an orthogonal projection $\Pi: \IR^d \to \IR^d$ whose kernel has dimension at most $k-1$ such that $\lim_{t \to \infty} \Delta_\Pi(X(t)) = 0$.
\end{lemma}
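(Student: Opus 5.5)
The plan is to combine a one-round contraction of thickness under the \emph{current} projection $\Pi_t$ with the accumulation point from Lemma~\ref{lem:accumulation_point}, and to transfer the contraction to the limit projection via Lemma~\ref{lem:continuity}. Throughout, let $D = \max_{x,y\in X(0)}\lVert x-y\rVert$. Since $P(t)\subseteq P(t-1)$ for averaging algorithms, $D$ bounds the diameter of every $X(t)$.

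\emph{Step 1 (one-round contraction).} Fix $t\geq 1$ and $i,j\in[n]$. By Lemma~\ref{lem:differences},
\[
x_i(t)-x_j(t)=\alpha u_\parallel(t-1)+(1-\alpha)u_\text{res}(t-1).
\]
Since $u_\parallel(t-1)\in\dir(X_{M(t)}(t-1))=\ker\Pi_t$, applying $\Pi_t$ gives $\Pi_t(x_i(t)-x_j(t))=(1-\alpha)\Pi_t u_\text{res}(t-1)$. We have $u_\text{res}=x-y$ with $x,y\in P(t-1)$. The map $(x,y)\mapsto\lVert\Pi_t(x-y)\rVert$ is convex, so its maximum over $P(t-1)\times P(t-1)$ is attained at vertices, i.e.\ at points of $X(t-1)$. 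Hence
\[
\Delta_{\Pi_t}(X(t))\le(1-\alpha)\,\Delta_{\Pi_t}(X(t-1)).
\]
The same convexity argument applied to $X(t)\subseteq P(t-1)$ shows that for \emph{every} fixed orthogonal projection $\Pi$, the sequence $t\mapsto\Delta_\Pi(X(t))$ is non-increasing. It therefore has a limit $L_\Pi\ge 0$.

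\emph{Step 2 (choice of $\Pi$).} By Lemma~\ref{lem:accumulation_point}, there exist $\Pi^\star\in\mathcal P$ and a subsequence $t_m\to\infty$ with $\Pi_{t_m}\to\Pi^\star$. Each $\ker\Pi_{t}=\dir(X_{M(t)}(t-1))$ has dimension at most $|M(t)|-1\le k-1$. For orthogonal projections the rank equals the trace, which is continuous and integer-valued. Thus $\operatorname{rank}\Pi_{t_m}=\operatorname{rank}\Pi^\star$ for large $m$, and consequently $\dim\ker\Pi^\star\le k-1$. Set $\Pi=\Pi^\star$.

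\emph{Step 3 (transfer to the limit).} Write $\varepsilon_m=\lVert\Pi^\star-\Pi_{t_m}\rVert_\text{op}\to0$. By the Lipschitz bound in the proof of Lemma~\ref{lem:continuity}, together with Step 1,
\[
\Delta_{\Pi^\star}(X(t_m))\le\Delta_{\Pi_{t_m}}(X(t_m))+\varepsilon_m D\le(1-\alpha)\Delta_{\Pi_{t_m}}(X(t_m-1))+\varepsilon_m D .
\]
Applying the Lipschitz bound once more to the middle term yields
\[
\Delta_{\Pi^\star}(X(t_m))\le(1-\alpha)\Delta_{\Pi^\star}(X(t_m-1))+(2-\alpha)\varepsilon_m D .
\]
Letting $m\to\infty$ and using monotonicity (both sides converge to $L_{\Pi^\star}$) gives $L_{\Pi^\star}\le(1-\alpha)L_{\Pi^\star}$. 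Since $\alpha>0$, this forces $L_{\Pi^\star}=0$, which is the claim.

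I expect the main subtlety to be Step 3, namely that the good projection changes from round to round. The key is that monotonicity of $\Delta_{\Pi^\star}(X(t))$ holds for the single fixed projection $\Pi^\star$. This lets a contraction that is only known along the subsequence suffice, without needing convergence of the whole sequence $(\Pi_t)$. The rank argument in Step 2 is the other point to check carefully.
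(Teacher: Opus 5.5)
Your proof is correct. It uses the same ingredients as the paper: the one-round contraction $\Delta_{\Pi_t}(X(t))\le(1-\alpha)\Delta_{\Pi_t}(X(t-1))$, an accumulation point $\Pi^\star$ of $(\Pi_t)$, the Lipschitz bound of Lemma~\ref{lem:continuity}, and monotonicity of thickness for a fixed projection. Where you differ is in how you close the argument.

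The paper works along the subsequence itself. It sets $a_i=\Delta_{\Pi_{\tau_i}}(X(\tau_i))$ and applies the contraction across the whole gap from $\tau_{i-1}$ to $\tau_i$. This gives the perturbed recursion $a_i\le(1-\alpha)a_{i-1}+\varepsilon_i$ with $\varepsilon_i\le C\lVert\Pi_{\tau_i}-\Pi_{\tau_{i-1}}\rVert_{\mathrm{op}}\to0$. It unrolls the recursion with a split-sum estimate to get $a_i\to0$, transfers this to $\Pi^\star$ by continuity, and only at the end uses monotonicity to fill in the rounds between the $\tau_i$.

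You instead use monotonicity for the single fixed projection $\Pi^\star$ at the outset, which gives you the limit $L_{\Pi^\star}$. You then compare the consecutive rounds $t_m-1$ and $t_m$, moving both to $\Pi^\star$ at a cost of $(2-\alpha)\varepsilon_m D$, and read off $L_{\Pi^\star}\le(1-\alpha)L_{\Pi^\star}$. This removes both the recursion-unrolling step and the final gap-filling step. You lose nothing by it, since the paper's route gives no rate either: the subsequence $\tau_i$ is uncontrolled.

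Your Step~2 also supplies something the paper only asserts, namely that $\dim\ker\Pi^\star\le k-1$. This genuinely needs an argument, because for general matrices the kernel dimension can jump up in a limit (e.g.\ $\varepsilon I\to0$). Your observation that rank equals the continuous, integer-valued trace on orthogonal projections is exactly what rules this out.
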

\begin{proof}
	Let $\Pi$ be such an orthogonal projection in the following.
	We show the lemma in several steps:

	\medskip
	\noindent\emph{Monotonicity of thickness.}
	Since each process updates its value to a convex combination of received values, we have $P(t) \subseteq P(t-1)$ for $t \geq 1$.
	For any projection $\Pi$ and sets $A \subseteq B$, it is $\Delta_\Pi(A) \leq \Delta_\Pi(B)$.
	It follows that for all $t \geq 1$,
	\begin{equation}
		\Delta_\Pi(P(t)) \leq \Delta_\Pi(P(t-1)) \enspace. \label{eq:monotonicity}
	\end{equation}

	\medskip
	\noindent\emph{Contraction of thickness in one round.}
	By Lemma~\ref{lem:differences} applied at time $t \geq 1$, for every $i,j \in [n]$ there exist
	$u_\parallel(t-1) \in \dir(X_{M(t)}(t-1))$ and
	$u_{\mathrm{res}}(t-1) \in \{x - y \mid x,y \in P(t-1)\}$ such that
	\[
		x_i(t)-x_j(t)
		= \alpha\,u_\parallel(t-1) + (1-\alpha)\,u_{\mathrm{res}}(t-1) \enspace.
	\]
	Apply the projection $\Pi_t$ that is orthogonal onto $\dir(X_{M(t)}(t-1))^\perp$, it is
	\[
		\Pi_t(x_i(t)-x_j(t))
		= (1-\alpha)\,\Pi_t(u_{\mathrm{res}}(t-1)),
	\]
	since $\Pi_t(u_\parallel(t-1)) = 0$ by construction of the projection.
	Taking norms, yields
	\[
		\|\Pi_t(x_i(t)-x_j(t))\|
		\le (1-\alpha)\,\|\Pi_t(u_{\mathrm{res}}(t-1))\| \enspace.
	\]
	Combining with
	\[
		\|\Pi_t(u_{\mathrm{res}}(t-1))\| \leq \sup_{x,y \in P(t-1)} \|\Pi_t(x - y)\| \leq \max_{x,y \in X(t-1)} \|\Pi_t(x - y)\|
	\]
	where the last inequality is due to the distances in a polyhedron being maximized by corner points,
	we obtain that for all $t \geq 1$,
	\begin{equation}
		\Delta_{\Pi_t}(X(t)) \leq (1-\alpha) \Delta_{\Pi_t}(X(t-1)) \enspace. \label{eq:one-step-contract}
	\end{equation}

	\medskip
	\noindent\emph{Contraction for fixed projection.}
	By Lemma~\ref{lem:accumulation_point}, the sequence of projections $(\Pi_t)_{t\geq 1}$, each with kernel of dimension at most $k-1$, has an accumulation point $\Pi^\star$ that itself is an orthogonal projection with kernel of dimension at most $k-1$.

	Let $(\Pi_{\tau_i})_{i \geq 0}$ be a subsequence with limit $\Pi^\star$.
	By \eqref{eq:one-step-contract} and Lemma~\ref{lem:continuity}, for each $i \geq 1$ we have
	\[
		\Delta_{\Pi_{\tau_i}}(X(\tau_i))
		\le (1-\alpha)\Delta_{\Pi_{\tau_{i-1}}}(X(\tau_{i-1}))
		+ \varepsilon_i
	\]
	where we choose
	\[
		\varepsilon_i = (1-\alpha)\,
		\big|
		\Delta_{\Pi_{\tau_i}}(X(\tau_{i-1}))
		-
		\Delta_{\Pi_{\tau_{i-1}}}(X(\tau_{i-1}))
		\big| \enspace.
	\]
	By Lemma~\ref{lem:continuity} and boundedness of $X(t)$, there exists a constant $C>0$ such that
	\[
		\varepsilon_i \le C \,\|\Pi_{\tau_i}-\Pi_{\tau_{i-1}}\|_{\mathrm{op}} \enspace.
	\]
	Since $\Pi_{\tau_i} \to \Pi^\star$, it follows that $\varepsilon_i \to 0$.

	We next show that the recursion
	\[
		a_i \le (1-\alpha)a_{i-1} + \varepsilon_i
	\]
	with $a_i = \Delta_{\Pi_{\tau_i}}(X(\tau_i))$ implies
	$
		\lim_{i\to\infty} a_i = 0
	$.

	To see this, let $\delta>0$.
	Since $\varepsilon_i \to 0$, there exists a $K$ such that
	$\varepsilon_i \le \delta$ for all $i \ge K$.
	Unrolling the recursion yields
	\[
		a_i
		\le (1-\alpha)^i a_0
		+ \sum_{k=1}^i (1-\alpha)^{\,i-k}\varepsilon_k \enspace.
	\]
	Splitting the sum at $K$, we obtain
	\[
		a_i
		\le (1-\alpha)^i a_0
		+ \sum_{k=1}^{K-1}(1-\alpha)^{\,i-k}\varepsilon_k
		+ \delta \sum_{k=K}^{i}(1-\alpha)^{\,i-k} \enspace.
	\]
	The first two terms converge to $0$ as $i\to\infty$, and the last term is bounded by
	\[
		\delta \sum_{m=0}^{\infty}(1-\alpha)^m = \frac{\delta}{\alpha} \enspace.
	\]
	Hence $\limsup_{i\to\infty} a_i \le \delta/\alpha$.
	Since $\delta$ was arbitrary, it follows that $\lim_{i\to\infty} a_i = 0$.

	From Lemma~\ref{lem:continuity}, using Lipschitz continuity of $\Pi \to \Delta_\Pi(\cdot)$
	and the fact that $\Pi_{\tau_i} \to \Pi^\star$,
	\[
		\lim_{i \to \infty} \Delta_{\Pi^\star}(X(\tau_i)) = 0\enspace.
	\]
	Applying monotonicity from \eqref{eq:monotonicity} for rounds $t$ between successive $\tau_i$, one obtains convergence of the thickness along the rounds. Thus,
	\[
		\lim_{t \to \infty} \Delta_{\Pi^\star}(X(t)) = 0\enspace.
	\]
	The lemma's statement follows by setting $\Pi = \Pi^\star$ and observing that
	the kernel of $\Pi$ has dimension at most $k-1$.
\end{proof}

\begin{proof}[Proof of Theorem~\ref{thm:solving}]
	Let $d \geq k \geq 1$.
	For an averaging algorithm with initial values in $\IR^d$ and minimum broadcasting weight $\alpha > 0$ executed in a $k$-broadcastable oblivious message adversary, by Lemma~\ref{lem:Delta-contraction}
	there exists an orthogonal projection $\Pi$ with $\lim_{t\to\infty}\Delta_\Pi(X(t)) = 0$.
	From the fact that $\ker(\Pi)$ has dimension at most $k-1$, the sequence
	$(X(t))_t$ converges onto a $(k-1)$-dimensional affine subspace, fulfilling (Subspace Agreement) for the $d$-to-$(k-1)$-dimensional subspace consensus problem.
	By the fact that for averaging algorithms $P(t) \subseteq P(0)$, (Validity) is fulfilled.
	The theorem's statement follows.
\end{proof}

From Theorem~\ref{thm:solving}, the reduction of message adversaries in Theorem~\ref{thm:rooted:to:broadcastable}, and the lower bound in Theorem~\ref{lem:imposs}, one finally obtains a complete characterization in oblivious message adversaries:

\begin{corollary}[Characterization of Asymptotic Subspace Consensus]
	Let $d > s \geq 0$.
	The problem of $d$-to-$s$-dimensional asymptotic subspace consensus in an oblivious message adversary $\N$ is solvable, if and only if  $\N$ is $(s+1)$-rooted.
	Any $\alpha$-safe averaging algorithm with $\alpha > 0$, and in particular the equal neighbor algorithm with $\alpha = 1/n$, with and without bounded periods of intermediate message relaying rounds, is a solution.
\end{corollary}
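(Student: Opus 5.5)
Both directions are proved separately, assembling results already in the paper.

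\emph{Necessity.} Suppose $\N$ is not $(s+1)$-rooted. Theorem~\ref{lem:imposs} states exactly that $d$-to-$s$-dimensional asymptotic subspace consensus is then unsolvable. Nothing further is needed.

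\emph{Sufficiency.} Suppose $\N$ is $(s+1)$-rooted and set $k=s+1\le d$. The plan has three steps.
\begin{enumerate}
\item Group the execution into blocks of $L=\lceil\frac{\pi^2+6}{6}n+1\rceil$ consecutive rounds. By Theorem~\ref{thm:rooted:to:broadcastable}, the product graph of each block is $k$-broadcastable. The blocks therefore simulate an execution in the oblivious message adversary consisting of all such products, which is $k$-broadcastable.
\item Check that the algorithm has positive minimum broadcasting weight in this block view.
\begin{itemize}
\item \emph{With relaying.} Messages are relayed for $L$ rounds and then one $\alpha$-safe averaging step is applied over the product graph. Every process receives from some member of the broadcasting set $M$ of that product. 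Hence $\sum_{j\in M}w_{ij}\ge\alpha$, so the minimum broadcasting weight is $\alpha$.
\item \emph{Without relaying.} I compose the $L$ averaging matrices. The block update is again an averaging step whose weights are products of the round weights. If $j$ reaches $i$ in the product graph along a path of length $L$, every factor on that path is at least $\alpha$, so the composite weight $w_{ij}$ is at least $\alpha^L$. The minimum broadcasting weight is therefore at least $\alpha^L>0$.
\end{itemize}
\item Apply Theorem~\ref{thm:solving} to the block-subsampled sequence $X(mL)$, $m\in\IN_0$. It shows that $X(mL)$ converges onto some $(k-1)=s$-dimensional affine subspace $E$.
\end{enumerate}

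\emph{Main obstacle.} The hardest step is transferring this convergence from the subsampled rounds to all rounds $t$. The convex hulls $P(t)$ are nested, so $X(t)\subseteq P(mL)$ for $mL\le t$. The distance to an affine subspace is convex, so its maximum over $P(mL)$ is attained at a vertex, that is, at a point of $X(mL)$. Since that maximum tends to $0$, every $x_i(t)$ converges onto $E$, which gives (Subspace Agreement). (Validity) follows from $P(t)\subseteq P(0)$. In particular, the equal neighbor algorithm is $1/n$-safe, which yields the stated special case.
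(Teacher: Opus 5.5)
Your proposal is correct and follows the paper's route: the paper obtains the corollary by combining Theorem~\ref{lem:imposs} for necessity with Theorem~\ref{thm:rooted:to:broadcastable} and Theorem~\ref{thm:solving} for sufficiency. You also spell out two steps the paper leaves implicit, and both are sound: the minimum broadcasting weight $\alpha^L$ of the composed block update when there is no relaying, and the transfer of convergence from the subsampled rounds $mL$ to all rounds via the nested hulls $P(t)$.
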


\section{Conclusion}\label{sec:conclusion}

We showed that a weakening of rooted oblivious message adversaries to
$d$-rooted adversaries achieves a dimension-reduction below the initial dimension~$d$.
Central to the proof is a symmetrization of the convex hull of process outputs to a body that is symmetric around the first axis.
This allows one to lower bound the volume that is shaped off the body by contraction of the convex hull, establishing also a bound
on the convergence speed towards a lower dimensionality.

We then extend this analysis to a complete characterization of when a contraction to a subspace of dimension $s < d$ can be achieved, showing that this is precisely the case if the oblivious message adversary is $(s+1)$-rooted.
Moreover, simple averaging algorithms are shown to be solution to this problem.
This shows that averaging algorithms degrade gracefully in (periods of) non-1-rooted message adversaries and that averaging algorithms are
also effective if convergence to a single point is not necessary.

The work also raises several follow-up questions.
An open problem is the quantification of the speed to dimensions $s < d$.
We also hypothesize that variants of the mid-point algorithm that have been shown to lead to fast convergence in one-dimensional \cite{CBFN16:icalp} and multidimensional settings \cite{CBFN16:centroid,FN18:disc}, provide significantly faster convergence than the $\alpha$-safe algorithms shown in this work.
In particular we conjecture that analogous to asymptotic consensus in rooted adversaries, the dependency on the dimension~$d$ can be removed.

\bibliographystyle{plainurl}
\bibliography{agents}

\end{document}